%&latex
\documentclass[12pt]{article}
\pdfoutput=1
\usepackage{amsmath}
\usepackage{graphicx,psfrag,epsf}
\usepackage{enumerate}
\usepackage{natbib}
\usepackage{url} % not crucial - just used below for the URL 

\pdfminorversion=4
% NOTE: To produce blinded version, replace "0" with "1" below.
\newcommand{\blind}{0}

% DON'T change margins - should be 1 inch all around.
\addtolength{\oddsidemargin}{-.5in}%
\addtolength{\evensidemargin}{-.5in}%
\addtolength{\textwidth}{1in}%
\addtolength{\textheight}{1.3in}%
\addtolength{\topmargin}{-.8in}%

\usepackage{algorithm}
\usepackage{algpseudocode}
\usepackage{dsfont}
\usepackage{amsthm}
\usepackage{xr}
\externaldocument{supplement}

\newcommand{\beq}{\begin{equation}}
\newcommand{\eeq}{\end{equation}}
\newcommand{\bmone}{\mathbf{1}}
\newcommand{\bma}{\mathbf{a}}
\newcommand{\bmb}{\mathbf{b}}
\newcommand{\bmp}{\mathbf{p}}
\newcommand{\bmr}{\mathbf{r}}
\newcommand{\bmx}{\mathbf{x}}
\newcommand{\bmy}{\mathbf{y}}
\newcommand{\bmz}{\mathbf{z}}
\newcommand{\bmA}{\mathbf{A}}
\newcommand{\bmH}{\mathbf{H}}
\newcommand{\bmI}{\mathbf{I}}
\newcommand{\bmX}{\mathbf{X}}
\newcommand{\bmbeta}{\boldsymbol{\beta}}
\newcommand{\bmgamma}{\boldsymbol{\gamma}}
\newcommand{\bmepsilon}{\boldsymbol{\epsilon}}
\newcommand{\bmlambda}{\boldsymbol{\lambda}}
\newcommand{\bmphi}{\boldsymbol{\phi}}
\newcommand{\bmPhi}{\boldsymbol{\Phi}}

\DeclareMathOperator*{\argmin}{arg\,min}

\newtheorem{theorem}{\bf Theorem}[section]
\newtheorem{proposition}[theorem]{Proposition}

\newtheorem{example}[theorem]{Example}
\newtheorem{definition}[theorem]{Definition}

\begin{document}

\def\spacingset#1{\renewcommand{\baselinestretch}%
{#1}\small\normalsize} \spacingset{1}

%%%%%%%%%%%%%%%%%%%%%%%%%%%%%%%%%%%%%%%%%%%%%%%%%%%%%%%%%%%%%%%%%%%%%%%%%%%%%%

\if0\blind
{
  \title{\bf Minimizing Sum of Truncated Convex Functions and Its Applications}
  \author{Tzu-Ying Liu and Hui Jiang\thanks{Please send all correspondence to jianghui@umich.edu.}\\
    Department of Biostatistics, University of Michigan\\
    Ann Arbor, MI 48105}
  \maketitle
} \fi

\if1\blind
{
  \bigskip
  \bigskip
  \bigskip
  \begin{center}
    {\LARGE\bf Title}
\end{center}
  \medskip
} \fi

\bigskip
\begin{abstract}
In this paper, we study a class of problems where the sum of truncated convex functions is minimized. In statistical applications, they are commonly encountered when $\ell_0$-penalized models are fitted and usually lead to NP-Hard non-convex optimization problems. In this paper, we propose a general algorithm for the global minimizer in low-dimensional settings. We also extend the algorithm to high-dimensional settings, where an approximate solution can be found efficiently. We introduce several applications where the sum of truncated convex functions is used, compare our proposed algorithm with other existing algorithms in simulation studies, and show its utility in edge-preserving image restoration on real data.
\end{abstract}

\noindent%
{\it Keywords:}  $\ell_0$ penalty; NP-Hard; non-convex optimization; sum of truncated convex functions; outlier detection; signal and image restoration;
\vfill

\newpage
\spacingset{1.45} % DON'T change the spacing!

\section{Introduction}

Regularization methods in statistical modeling have gain popularity in many fields, including variable selection, outlier detection, and signal processing. Recent studies~\citep{shen2012likelihood,she2012outlier} have shown that models with non-convex penalties possess superior performance compared with those with convex penalties. While the latter in general can be obtained with ease by virtue of many well-developed methods for convex optimization~\citep{boyd2004convex}, there are limited options in terms of global solutions for non-convex optimization, which are more and more commonly encountered in modern statistics and engineering. Current approaches often rely on convex relaxation~\citep{candes2010power}, local solutions by iterative algorithms~\citep{fan2001variable} or trading time for global optimality with stochastic search~\citep{zhigljavsky2007stochastic}.

In this paper, we study a special class of non-convex optimization problems, for which the objective function can be written as a sum of truncated convex functions. That is,
\begin{equation}
\label{model}
\bmx=\argmin_\bmx\sum_{i=1}^n\min\{f_i(\bmx),\lambda_i\},
\end{equation}
where $f_i:R^d\rightarrow R, i=1,\ldots,n,$ are convex functions and the truncated levels $\lambda_i\in R, i=1,\ldots,n,$ are constants. Due to the truncation of $f_i(\cdot)$ at $\lambda_i$, the objective function is often non-convex. See Figure~\ref{1d} for an example. 

While in general such problems are NP-Hard (see Section~\ref{sec:algorithm} for formal results), we show that for some $f_i(\cdot)$ there is a polynomial-time algorithm for the global minimizer in low-dimensional settings. The idea is simple: When the objective function is piecewise convex (e.g., see Figure~\ref{1d}), we can partition the domain so that the objective function becomes convex when restricted to each piece. This way, we can find the global minimizer by enumerating all the pieces, minimizing the objective function on each piece, and taking the minimum among all local minima.

\begin{figure}
\begin{center}
\includegraphics[width=0.5\textwidth]{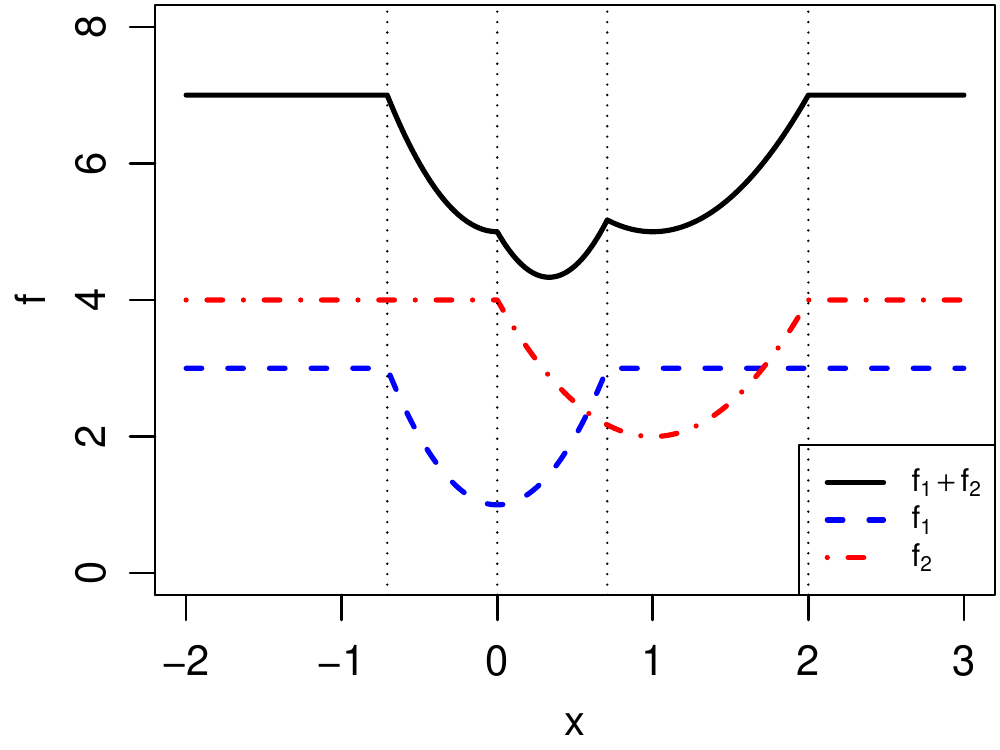}
\end{center}
\caption{The sum of two truncated quadratic functions $f_1+f_2$ (in black), where $f_1(x)=\min\{4x^2+1,3\}$ (in blue) and $f_2(x)=\min\{2(x-1)^2+2,4\}$ (in red). \label{1d}}
\end{figure}

The rest of the paper is organized as follows. In Section~\ref{sec:applications}, we demonstrate the utility of our algorithm in several applications where the objective function can be transformed into a sum of truncated convex functions. In Section~\ref{sec:algorithm}, we lay out the general algorithm for the global solution and its implementation in low-dimensional settings. As we will see in the complexity analysis, the running time grows exponentially with the number of dimensions. We therefore make a compromised but efficient extension of the algorithm in high-dimensional settings. In Section~\ref{sec:experiments}, we compare our proposed algorithm with existing methods in simulation studies, and apply our proposed algorithm to real-life image restoration problems. Discussions are given in Section~\ref{sec:discussion}.

\section{Applications}
\label{sec:applications}

\subsection{Outlier detection in linear models}
\label{sec:outlier.application}

The task of outlier detection in linear regression can be formulated as a problem of variable selection. As in~\citet{gannaz2007robust} and~\citet{mccann2007robust}, given $n$ observations and $p$ covariates, we can add $n$ additional parameters $\{\gamma_i\}_{i=1}^n$ denoting the amount by which the observations are outlying. That is, 
\beq
\label{outlier}
y_i=\bmx_i^T\bmbeta+\gamma_i+\epsilon_i, \quad \quad i=1,\ldots,n,
\eeq
where $y_i\in R, \bmx_i\in R^p, i=1,\ldots,n,$ are the observations, $\bmbeta\in R^p, \gamma_i\in R, i=1,\ldots,n,$ are the parameters of interest, and $\{\epsilon_i\}_{i=1}^n$ are i.i.d. $N(0,\sigma^2)$. Since there are $n+p$ parameters but only $n$ observations, the model is non-identifiable. \citet{gannaz2007robust} used an $\ell_1$ penalty in the objective function to force sparsity in $\gamma$ such that $y_i$ is considered an outlier if $\gamma_i \neq 0$ and an observation conforming to the assumed distribution if $\gamma_i=0$. \citet{mccann2007robust} treated~(\ref{outlier}) as a variable selection problem and applied the Least Angle Regression. Similar idea for outlier detection has also been used for robust Lasso regression~\citep{nasrabadi2011robust,katayama2015sparse}, Poisson regression~\citep{jiang2013penalized}, logistic regression~\citep{tibshirani2014robust}, clustering~\citep{witten2013penalized,georgogiannis2016robust}, as well as a large class of regression and classification problems intoduced in~\citet{lee2012regularization}.

\citet{she2012outlier} took into consideration the issues of masking and swamping when there are multiple outliers in the data. By definition, masking refers to the situation when a true outlier is not detected because of other outliers. Swamping, on the other hand, refers to the situation when an observation conforming to the assumed distribution is considered outlying under the influence of true outliers. They pointed out that using the $\ell_0$ penalty instead of the $\ell_1$ penalty in the objective function could resolve both issues. Assuming $\sigma$ is known, adding an $\ell_0$ penalty to the negative log-likelihood function for model~(\ref{outlier}), the objective function becomes
\beq
\label{likelihood}
f(\bmbeta,\bmgamma)=\sum_{i=1}^n(y_i-\bmx^T_i\bmbeta-\gamma_i)^2+\lambda\sum_{i=1}^n1(\gamma_i\neq0),
\eeq
where $\lambda$ is a tuning parameter and $1(\cdot)$ is the indicator function. It can be shown that this problem can be solved by minimizing a sum of truncated quadratic functions.
\begin{proposition}
\label{equivalence}
Minimizing~(\ref{likelihood}) in $\bmbeta$ and $\bmgamma$ jointly is equivalent to minimizing the following sum of truncated quadratic functions in $\bmbeta$
$$
g(\bmbeta)=\sum_{i=1}^{n} \min\{(y_i-\bmx^T_i\bmbeta )^2, \lambda\}.
$$
\end{proposition}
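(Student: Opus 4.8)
The plan is a straightforward profiling (partial minimization) argument: fix $\bmbeta$, minimize $f(\bmbeta,\bmgamma)$ over $\bmgamma$ in closed form, and check that the resulting reduced objective in $\bmbeta$ is exactly $g(\bmbeta)$. Since $\min_{\bmbeta,\bmgamma}f(\bmbeta,\bmgamma)=\min_{\bmbeta}\bigl(\min_{\bmgamma}f(\bmbeta,\bmgamma)\bigr)$, this yields equality of the optimal values, and keeping track of the minimizing $\bmgamma$ gives the correspondence between minimizers.

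First I would fix $\bmbeta$ and abbreviate the residuals $r_i=y_i-\bmx_i^T\bmbeta$, so that $f(\bmbeta,\bmgamma)=\sum_{i=1}^n h_i(\gamma_i)$ with $h_i(\gamma_i)=(r_i-\gamma_i)^2+\lambda\,1(\gamma_i\neq 0)$. The objective is separable in the coordinates of $\bmgamma$, so it suffices to minimize each $h_i$ over $\gamma_i\in R$. Splitting on whether $\gamma_i=0$: the choice $\gamma_i=0$ gives $h_i(0)=r_i^2$, whereas any $\gamma_i\neq 0$ gives $h_i(\gamma_i)\ge\lambda$, with the value $\lambda$ attained at $\gamma_i=r_i$ provided $r_i\neq 0$. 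Comparing the two cases gives $\min_{\gamma_i\in R}h_i(\gamma_i)=\min\{r_i^2,\lambda\}$, attained by $\gamma_i=0$ when $r_i^2\le\lambda$ and by $\gamma_i=r_i$ when $r_i^2>\lambda$. Summing over $i$ yields $\min_{\bmgamma}f(\bmbeta,\bmgamma)=\sum_{i=1}^n\min\{(y_i-\bmx_i^T\bmbeta)^2,\lambda\}=g(\bmbeta)$, and hence $\min_{\bmbeta,\bmgamma}f(\bmbeta,\bmgamma)=\min_{\bmbeta}g(\bmbeta)$. For the minimizer correspondence: if $\hat\bmbeta$ minimizes $g$, then pairing it with the $\hat\bmgamma$ built coordinatewise as above gives a joint minimizer of $f$; conversely, if $(\hat\bmbeta,\hat\bmgamma)$ minimizes $f$, then $g(\hat\bmbeta)\le f(\hat\bmbeta,\hat\bmgamma)=\min_{\bmbeta}g(\bmbeta)\le g(\hat\bmbeta)$, so $\hat\bmbeta$ minimizes $g$.

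The only point requiring a word of care — and it is a minor one — is that the minimization of $h_i$ over $\gamma_i\neq 0$ is an infimum that is not attained exactly when $r_i=0$; but in that degenerate case the overall minimum $\min\{r_i^2,\lambda\}=0$ is still attained (at $\gamma_i=0$), so the argument goes through unchanged. Beyond this, there is no real obstacle: the result is a clean consequence of separability and the elementary one-dimensional minimization.
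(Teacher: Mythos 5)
Your proposal is correct and follows essentially the same route as the paper's proof: profile out $\bmgamma$ coordinatewise by comparing the $\gamma_i=0$ case (value $r_i^2$) with the optimal $\gamma_i=r_i$ case (value $\lambda$), obtaining $\min\{r_i^2,\lambda\}$ per term. Your added remarks on the non-attainment of the infimum when $r_i=0$ and on the explicit correspondence of minimizers are small refinements the paper leaves implicit.
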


This result is consistent with the proposition by~\citet{she2012outlier} that the estimate $\hat{\bmbeta}$ from minimizing~(\ref{likelihood}) is an $M$-estimate associated with the skipped-mean loss. Since the objective function is non-convex, \citet{she2012outlier} proposed an iterative hard thresholding algorithm named $\Theta$-IPOD (iterative procedure for outlier detection) to minimize it. Similar to other iterative procedures, $\Theta$-IPOD only guarantees local solutions. A simulation study comparing our proposed algorithm with $\Theta$-IPOD and several other robust linear regression algorithms is presented in Section~\ref{sec:outlier.experiment}. We implement the $\Theta$-IPOD algorithm in R (see Supplementary Algorithm~\ref{alg:IPOD} for details).

Furthermore, Proposition~\ref{equivalence} can be extended to the class of generalized linear models (GLMs). Suppose that $Y_i\in R, i=1,\ldots,n,$ follow a distribution in the exponential family,
$$
f(Y_i=y_i | \theta_i, \phi)=\exp\left\{ \frac{y_i \theta_i -b(\theta_i)}{a(\phi)}+c(y_i, \phi)\right\},
$$
where $\theta_i$ is the canonical parameter and $\phi$ is the dispersion parameter (assumed known here).  For a GLM with canonical link function $g$, $\theta_i=g(\mu_i)=\bmx_i^T\bmbeta+\gamma_i$, the $\ell_0$-penalized negative log-likelihood function is
\beq
\label{glm}
f(\bmbeta,\bmgamma)=\sum^n_{i=1}\{ b(\bmx_i^T\bmbeta+\gamma_i)-(\bmx_i^T\bmbeta+\gamma_i)y_i\}+\lambda\sum_{i=1}^n1(\gamma_i \neq 0).
\eeq
It can be shown that minimizing~(\ref{glm}) is equivalent to minimizing a sum of truncated convex functions.
\begin{proposition}
\label{glm.equivalence}
Minimizing~(\ref{glm}) in $\bmbeta$ and $\bmgamma$ jointly is equivalent to minimizing the following function in $\bmbeta$
$$
g(\bmbeta)=\sum^n_{i=1} \min\{b(\bmx_i^T\bmbeta)-(\bmx_i^T\bmbeta)y_i, \lambda_i^*\},
$$
where $\lambda_i^*=b(g(y_i))-g(y_i)y_i +\lambda, i=1,\ldots,n,$ are constants. Since $b$ is convex~\citep{agarwal2011generative}, the above is a sum of truncated convex function. 
\end{proposition}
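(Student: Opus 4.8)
The plan is to profile out $\bmgamma$ from the joint objective~(\ref{glm}), mirroring the proof of Proposition~\ref{equivalence}. The key structural fact is that for a fixed $\bmbeta$ the penalized negative log-likelihood~(\ref{glm}) decouples into $n$ univariate subproblems, the $i$-th depending only on $\gamma_i$:
$$
\min_{\bmgamma} f(\bmbeta,\bmgamma)=\sum_{i=1}^{n}\ \min_{\gamma_i\in R}\Big[\,b(\bmx_i^T\bmbeta+\gamma_i)-(\bmx_i^T\bmbeta+\gamma_i)y_i+\lambda\,1(\gamma_i\neq0)\,\Big].
$$
So it suffices to evaluate each inner minimum in closed form and then minimize the result over $\bmbeta$.

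First I would split the $i$-th inner minimization into the cases $\gamma_i=0$ and $\gamma_i\neq0$. The case $\gamma_i=0$ contributes $b(\bmx_i^T\bmbeta)-(\bmx_i^T\bmbeta)y_i$ with no penalty. For $\gamma_i\neq0$ I would substitute $u=\bmx_i^T\bmbeta+\gamma_i$ and minimize $b(u)-uy_i+\lambda$ over $u$; since $b$ is convex, $u\mapsto b(u)-uy_i$ is convex, and the excluded point $u=\bmx_i^T\bmbeta$ does not affect the infimum (if it happens to be the unconstrained minimizer, the $\gamma_i=0$ branch already attains a strictly smaller value, and that branch is selected). Next I would identify the minimizer of $b(u)-uy_i$ from the first-order condition $b'(u)=y_i$: for a canonical-link GLM this is exactly $u=g(y_i)$, because $g=(b')^{-1}$ (equivalently $\mu=b'(\theta)$ and $\theta=g(\mu)$), so the minimum value is $b(g(y_i))-g(y_i)y_i$, and adding $\lambda$ gives the constant $\lambda_i^*$ in the statement. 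Combining the two cases, the $i$-th inner minimum equals $\min\{\,b(\bmx_i^T\bmbeta)-(\bmx_i^T\bmbeta)y_i,\ \lambda_i^*\,\}$.

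Summing over $i$ then gives $\min_{\bmgamma}f(\bmbeta,\bmgamma)=g(\bmbeta)$, hence $\min_{\bmbeta,\bmgamma}f(\bmbeta,\bmgamma)=\min_{\bmbeta}g(\bmbeta)$; the equivalence of minimizers follows by reading off, for an optimal $\hat\bmbeta$, the minimizing $\hat\gamma_i$ from whichever branch attains the $i$-th term (either $\hat\gamma_i=0$ or $\hat\gamma_i=g(y_i)-\bmx_i^T\hat\bmbeta$). Finally I would observe that each summand of $g$ is a truncated convex function: $\bmbeta\mapsto b(\bmx_i^T\bmbeta)-(\bmx_i^T\bmbeta)y_i$ is convex (convex $b$ composed with an affine map, minus a linear term) and $\lambda_i^*$ is a constant, which is the closing claim of the proposition.

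The only point needing care --- and the main obstacle --- is that $b'$ need not be surjective onto the observed responses: in logistic regression with $y_i\in\{0,1\}$, for instance, $b'(u)=y_i$ has no finite root, so $g(y_i)=\pm\infty$ and the minimum of $b(u)-uy_i$ is an infimum approached in the limit. I would handle this by reading $b(g(y_i))-g(y_i)y_i$ as $\inf_{u\in R}\{b(u)-uy_i\}$, which is finite whenever $y_i$ lies in the closure of the mean-value space (i.e. for any legitimate observation), and by checking that the case analysis above goes through verbatim with infima in place of minima --- in particular that deleting the single point $u=\bmx_i^T\bmbeta$ still does not change the infimum. With this reading the formula for $\lambda_i^*$ is unchanged (it equals $\lambda$ in the logistic case), the identity $\inf_{\bmbeta,\bmgamma}f=\inf_{\bmbeta}g$ holds unconditionally, and the correspondence of minimizers holds whenever they exist.
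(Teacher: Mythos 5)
Your proof follows essentially the same route as the paper's: profile out $\bmgamma$ componentwise, compare the $\gamma_i=0$ branch against the $\gamma_i\neq0$ branch minimized via the first-order condition $b'(u)=y_i$ (i.e.\ $u=g(y_i)$ under the canonical link), and recognize each resulting summand as a convex function of $\bmbeta$ truncated at the constant $\lambda_i^*$. Your extra care about the deleted point $u=\bmx_i^T\bmbeta$ not affecting the infimum, and about responses for which $b'(u)=y_i$ has no finite root (so that $\lambda_i^*$ must be read as $\inf_u\{b(u)-uy_i\}+\lambda$, e.g.\ in the logistic case), goes beyond the paper's proof, which silently assumes attainment.
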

\begin{example}
Suppose that $\{Y_i\}_{i=1}^n$ follow Poisson distributions with mean $\{\mu_i\}_{i=1}^n$, respectively, and that $g(\mu_i)=\log\mu_i=\bmx_i^T\bmbeta+\gamma_i$, where $\gamma_i = 0$ if $y_i$ conforms to the assumed distribution and $\gamma_i \neq 0$ if $y_i$ is an outlier. The $\ell_0$-penalized negative log-likelihood function is
\beq
\label{poisson}
f(\bmbeta,\bmgamma) =\sum_{i=1}^n\left\{e^{\bmx_i^T\bmbeta+\gamma_i}-(\bmx_i^T\bmbeta+\gamma_i)y_i\right\}+\lambda\sum_{i=1}^n1(\gamma_i \neq0).
\eeq
According to Proposition~\ref{glm.equivalence}, minimizing~(\ref{poisson}) is equivalent to minimizing the following function
$$
g(\bmbeta)=\sum_{i=1}^n \min \{ e^{\bmx_i^T\bmbeta}-(\bmx_i^T\bmbeta)y_i, \lambda^*_i\}, \text{ where } \lambda^*_i=\lambda-y_i\log y_i+y_i, 
$$
which is a sum of truncated convex functions.
\end{example}
 
\subsection{Convex shape placement} 
\label{sec:shape.application}

Given a convex shape $S\subset R^d$, and $n$ points $\bmp_i\in R^d, i=1,\ldots,n$, each associated with weight $w_i>0$, the problem of finding a translation of $S$ such that the total weight of the points contained in $S$ is maximized has applications in the placement of facilities or resources such as radio stations, power plants or satellites~\citep{mehrez1982maximal}. For some simple shapes (e.g., circles or polygons) in low-dimensional settings, this problem has been well studied~\citep{chazelle1986circle,barequet1997translating}.

We show that this problem can be solved by minimizing a sum of truncated convex functions. Without loss of generality, let $S_0\subset R^d$ denote the region covered by $S$ when it is placed at the origin. Here the location of $S$ can be defined as the location of its centroid. For each point $\bmp_i$, let $S_i\subset R^d$ be the set of locations for placing $S$ such that it covers $p_i$. It is easy to see that $S_i=\{\bmx : \bmp_i-\bmx\in S_0\}=\{\bmp_i-\bmy : \bmy\in S_0\}$, and that the shape of $S_i$ is simply a mirror image of $S_0$ and therefore it is also convex. Furthermore, define convex function $f_i:R^d\rightarrow R$ as 
$$f_i(\bmx)=\left\{\begin{array}{ll}
	-w_i	&\mbox{\quad if }\bmx\in S_i,\\
	\infty	&\mbox{\quad otherwise.}
\end{array}\right.
$$
Then the optimal placement of $S$ can be found by minimizing the sum of truncated convex functions $\sum_{i=1}^n\min\{f_i(\bmx),\lambda_i\}$ as in~(\ref{model}) where $\lambda_i=0, i=1,\ldots,n$.

Some examples of this application are given in Section~\ref{sec:shape.experiment}.

\subsection{Signal and image restoration}
\label{sec:image.application}

Signal restoration aims to recover the original signal from observations corrupted by noise. Suppose that the observed data $\bmy$ are generated from the original data $\bmx$ following the model~\citep{portilla2015efficient}:
$$\bmy=\bmH\bmx+\bmepsilon$$
where $\bmH$ is a matrix performing some linear transformation on the data (e.g., smoothing) and $\bmepsilon$ is the vector the measurement errors, often modeled as additive white Gaussian noise (AWGN). The goal is to estimate (a.k.a. restore or reconstruct) $\bmx$ from observed $\bmy$ and a known $\bmH$. When both $\bmx$ and $\bmy$ are (vectorized) images, the problem is called image restoration.

During this restoration process, one often wants to preserve the edges in the original signal, if there were any. One popular approach is to minimize the following regularized objective function (a.k.a. energy function~\citep{nikolova2011energy}):
$$\hat{\bmx}=\argmin_\bmx{L(\bmH\bmx-\bmy)+\alpha p(\bmx)}$$
where $L(\bmH\bmx-\bmy)$ is the loss function, usually taken as the negative log-likelihood function (e.g., $||\bmH\bmx-\bmy||^2$ in case of Gaussian noise), $p(\bmx)$ is a penalty function to introduce the prior that one wishes to enforce on the original data $\bmx$, and $\alpha$ is a tuning parameter. Many penalty functions have been studied in the literature. While convex penalty functions are generally easier to optimize, non-convex penalty functions can lead to better restoration quality~\citep{nikolova2010fast}. In particular, the truncated quadratic penalty has been found to be quite effective~\citep{nikolova2000thresholding,portilla2015efficient}. For instance, to promote both sharp edges and smooth regions in the estimated $\hat{\bmx}$, a truncated quadratic penalty on the differences between neighboring data points can be used:
$$p(\bmx)=\sum_{i,j\in I, i\in D(j)}\min\{(x_i-x_j)^2, \lambda\},$$
where $I$ is the index set of all the data points (or pixels), and $i\in D(j)$ means that data points (or pixels) $i$ and $j$ are neighbors of each other. Together with this penalty function, the energy function $L(\bmH\bmx-\bmy)+\alpha p(\bmx)$ with the loss function for Gaussian noise is in the form of a sum of truncated quadratic functions, where the loss function $L(\bmH\bmx-\bmy)=||\bmH\bmx-\bmy||^2$ can be regarded as a sum of quadratic functions truncated at infinity. A simulation study comparing our proposed algorithm with other algorithms for signal restoration and an application of our proposed algorithm to image restoration on real data are presented in Section~\ref{sec:image.experiment}.

\section{Methods}
\label{sec:algorithm}

First, the general problem of minimizing a sum of truncated convex functions is in the class of NP-Hard. This can be shown by reducing the 3-satisfiability (3-SAT) problem~\citep{cook1971complexity,karp1972reducibility}, an NP-complete problem, to the problem of minimizing a sum of truncated convex functions. 
\begin{proposition}
\label{3sat}
The 3-SAT problem can be reduced to the problem of minimizing a sum of truncated convex functions. 
\end{proposition}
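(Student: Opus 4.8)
\noindent\emph{Proof idea.} I plan to give a polynomial-time reduction from 3-SAT. Given a formula $\phi$ on boolean variables $z_1,\dots,z_d$ with clauses $C_1,\dots,C_m$, introduce a real variable $x_j\in R$ for each $z_j$, under the intended coding ``$x_j\approx 0$'' $\leftrightarrow$ ``$z_j=$ false'' and ``$x_j\approx 1$'' $\leftrightarrow$ ``$z_j=$ true''. To force $x_j$ towards $\{0,1\}$, use a \emph{binary gadget} $B_j(x_j)=\min\{25x_j^2,1\}+\min\{25(x_j-1)^2,1\}$, a sum of two truncated quadratics; an elementary one-dimensional computation gives $B_j(x_j)=1+\min\{25\,\mathrm{dist}(x_j,\{0,1\})^2,\,1\}$, so $B_j$ attains its minimum value $1$ exactly at $x_j\in\{0,1\}$, with a ``moat'' of height proportional to the squared distance from $\{0,1\}$. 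For each clause $C_k$ let $N_k(\bmx)=\sum_{z_j\in C_k}(1-x_j)+\sum_{\neg z_j\in C_k}x_j$; this affine function equals, at any binary point, the number of literals of $C_k$ that are made false, so $C_k$ is violated iff $N_k=3$. Use the \emph{clause gadget} $\min\{\max\{N_k(\bmx)-2,0\},\,m+1\}$, which is a truncated convex function ($\max\{N_k-2,0\}$ is convex) and, at binary points, equals $1$ if $C_k$ is violated and $0$ otherwise. The reduction outputs
\begin{equation*}
F_\phi(\bmx)=\sum_{j=1}^d\Big(\min\{25x_j^2,1\}+\min\{25(x_j-1)^2,1\}\Big)+\sum_{k=1}^m\min\big\{\max\{N_k(\bmx)-2,0\},\,m+1\big\},
\end{equation*}
a sum of $2d+m$ truncated convex functions with polynomial-size data, computable from $\phi$ in polynomial time.

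I then claim that $\phi$ is satisfiable if and only if there is $\bmx\in R^d$ with $F_\phi(\bmx)\le d+\tfrac14$. One direction is immediate: evaluating $F_\phi$ at the binary point $\bmx^\sigma$ of a satisfying assignment $\sigma$ gives $d+0=d\le d+\tfrac14$. For the other direction, suppose $F_\phi(\bmx^\ast)\le d+\tfrac14$. Since $\sum_jB_j\ge d$ and the clause terms are nonnegative, $\sum_j(B_j(x^\ast_j)-1)\le\tfrac14$, and the moat identity forces $\mathrm{dist}(x^\ast_j,\{0,1\})\le\tfrac1{10}$ for every $j$; let $\sigma^\ast_j\in\{0,1\}$ be the nearest integer. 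The same bookkeeping gives $\sum_k\min\{\max\{N_k(\bmx^\ast)-2,0\},m+1\}\le\tfrac14$. If $\sigma^\ast$ violated some clause $C_k$, then $N_k$ equals $3$ at the binary point of $\sigma^\ast$, and since $N_k$ is affine with exactly three nonzero coefficients, all $\pm1$, we get $N_k(\bmx^\ast)\ge 3-3\cdot\tfrac1{10}$, hence the $C_k$ term of $F_\phi$ is at least $\tfrac7{10}>\tfrac14$ --- contradicting the previous bound. So $\sigma^\ast$ satisfies every clause. This proves the equivalence, so minimizing a sum of truncated convex functions is NP-hard, and in particular Problem~(\ref{model}) is NP-hard already when every summand is a truncated quadratic or a piecewise-linear convex function.

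The routine parts are the one-dimensional analysis of $B_j$ and the arithmetic of the constants. The step I would write out most carefully is the \emph{composition} of the gadgets: one must rule out that low objective value is achieved ``between'' the binary states, and instead show that every point below the threshold coordinatewise rounds to a genuine satisfying assignment. This is exactly what the moat inequality (forcing closeness to $\{0,1\}^d$) together with the Lipschitz robustness of the affine functions $N_k$ deliver; note that keeping each clause gadget of the form $\min\{\max\{\mathrm{affine},0\},\cdot\}$, rather than a truncated quadratic, is what makes every \emph{satisfied} clause contribute exactly $0$, so the single threshold $d+\tfrac14$ cleanly separates satisfiable from unsatisfiable instances.
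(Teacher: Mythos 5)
Your reduction is correct, but it takes a genuinely different route from the paper's. The paper encodes the truth value of $b_k$ by the sign of $x_k$ and, for each clause, introduces \emph{seven} truncated convex functions, each the indicator (value $0$ on a convex intersection of three half-spaces, $+\infty$ elsewhere, truncated at $1$) of one of the seven literal-patterns satisfying that clause; the resulting objective is piecewise constant, equal to $7m$ minus the number of satisfied clauses, so satisfiability is equivalent to the minimum being $6m$, and no rounding or perturbation analysis is needed. You instead encode truth values by proximity to $\{0,1\}$, force near-binary solutions with truncated-quadratic ``moat'' gadgets $B_j$, count false literals with an affine function per clause, and penalize violation with a truncated hinge $\min\{\max\{N_k-2,0\},m+1\}$; the price is a quantitative robustness argument (your $1/10$-rounding and the $0.7>1/4$ separation, both of which check out), and the payoff is twofold: your summands are all \emph{finite-valued} --- the paper's $g_{it}$ take the value $+\infty$ off a convex set, which stretches the stated hypothesis $f_i:R^d\to R$ --- and you use only $2d+m$ functions versus $7m$, establishing NP-hardness already for sums of truncated quadratics and truncated piecewise-linear convex functions. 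Conversely, the paper's construction has the advantage that the objective value exactly counts unsatisfied clauses, which makes the correspondence with (MAX-)3-SAT immediate and avoids all constant-chasing. One small point worth writing out if you formalize this: when $\phi$ is unsatisfiable your argument shows $F_\phi(\bmx)>d+\tfrac14$ for every $\bmx$, hence $\inf F_\phi\ge d+\tfrac14$, while satisfiability gives $\min F_\phi=d$; the gap $[d,\,d+\tfrac14)$ is what makes the decision reduction sound even if the infimum in the unsatisfiable case were not attained.
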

Consequently, a universal algorithm for solving the general problem of minimizing a sum of truncated convex functions with polynomial running time is unlikely to exist~\citep{michael1979computers}. However, when partitioning the search space such that the objective function is convex when restricted on each region and enumerating all the regions is feasible, a polynomial time algorithms does exist (note that here we consider observations as the input and hold dimensionality of the search space constant). Next, We show that it is in fact the case for some commonly used convex functions in low-dimensional settings.

\subsection{Notations}
\label{sec:notations} 
 
Given $n$ convex functions $f_ i: R^d\rightarrow R, i=1,\ldots,n,$ and constants $\lambda_i\in R$, $i=1,\ldots,n$, we want to find $\bmx\in R^d$ such that the following sum is minimized at $\bmx$
\beq
\label{objective}
f(\bmx)=\sum_{i=1}^{n}\min\{ f_i(\bmx), \lambda_i \} . 
\eeq
Without loss of generality, we further assume $\lambda_i=0$ for all $i$, since minimizing~(\ref{objective}) is equivalent to minimizing 
$$
g(\bmx)=\sum_{i=1}^{n} \min \{g_i(\bmx), 0\} + \sum_{i=1}^{n}\lambda_i. 
$$
where $g_i:R^d\rightarrow R$ is defined as $g_i(\bmx)=f_i(\bmx)-\lambda_i$, which is also convex. Furthermore, we define $C_i\subset R^d$ as the convex region on which $f_i$ is less than or equal to zero,
$$
C_i:=\{\bmx: f_i(\bmx)\leq 0\},
$$
and we define $\partial C_i := \{\bmx: f_i(\bmx)=0\}$, the boundary of $C_i$, as the truncation boundary of $f_i$.
Then, $\{\partial C_i\}_{i=1}^n$, the truncation boundaries of all the $f_i$'s, partition the domain $R^d$ into disjoint pieces $A_1,\ldots,A_m$ such that
$$A_j\cap A_k=\emptyset, \quad\forall j\neq k \quad \mbox{and} \quad \cup_{j=1}^mA_j=R^d,$$
where $A_j$ is defined as
$$A_j=(\underset{k \in I_j}{\cap}C_k) \cap (\underset{l \notin I_j}{\cap}C_l^c), \quad I_j\subset\{1,\ldots,n\},\quad j=1,\ldots,m,$$
where $I_j$ is the index set for a subset of $\{f_1,\ldots,f_n\}$ such that given any $\bmx\in A_j$, $f_k(\bmx)\leq0$ for all $k\in I_j$ and $f_k(\bmx)>0$ for all $k\notin I_j$. An example of partitioning $R^2$ into disjoint pieces $A_1,\ldots,A_m$ is shown in Figure~\ref{fig:example}. The algorithms to find and traverse through all $A_j$'s while constructing the corresponding $I_j$'s will be described in Sections~\ref{sec:framework} and ~\ref{sec:implemention}.

\begin{figure}
\centering
\includegraphics[width=0.5\textwidth]{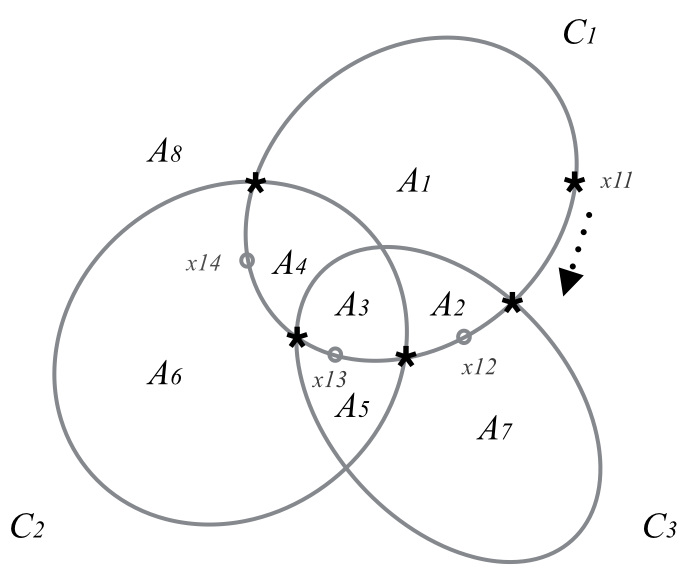}
\caption{The corresponding $C_i$'s of three convex functions $f_1, f_2, f_3$ define on $R^2$, where $C_i=\{\bmx: f_i(\bmx)\leq 0\}$. The boundaries of $\{C_i\}_{i=1}^3$ partition $R^2$ into eight disjoint pieces $\{A_j\}_{j=1}^{8}$.\label{fig:example}}
\end{figure}

\subsection{The general algorithm}
\label{sec:framework}

Our goal is to find the local minimum on each region $A_j$ in the partition and take the minimum of all local minima as the global solution. That is,
$$
\min_\bmx \sum_{i=1}^{n}\min\{f_i(\bmx), 0\} =\min_j\min_{\bmx\in A_j} \sum_{k \in I_j} f_k(\bmx).
$$
To minimize $f(\bmx)$ when restricted to $A_j$, we need to find the index set $I_j$, and minimize $\sum_{k\in I_j}f_k(\bmx)$ subject to $\bmx\in A_j$,
which leads to a series of constrained optimization problems. Although the objective function $\sum_{k\in I_j}f_k(\bmx)$ is a sum of convex functions and therefore is also convex, the domain $A_j$ can be a non-convex set. For instance, except for $A_3$, all other $A_j$'s in Figure~\ref{fig:example} are non-convex sets. Solving such constrained optimization problems can be very challenging. Fortunately, the following proposition shows that it is safe to ignore the constraint $\bmx\in A_j$ when minimizing $\sum_{k\in I_j}f_k(\bmx)$, and consequently, we only need to solve a series of unconstrained convex optimization problems, which is much easier.
\begin{proposition} 
\label{ignore}
Using the notations defined in Section~\ref{sec:notations}, we have
$$
\min_\bmx \sum_{i=1}^{n}\min\{f_i(\bmx), 0\} =\min_j\min_\bmx \sum_{k \in I_j} f_k(\bmx)
$$
\end{proposition}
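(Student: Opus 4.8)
The plan is to prove equality by establishing the two opposite inequalities, both of which are soft: neither uses convexity, only the partition structure and the elementary bounds $\min\{a,0\}\le a$ and $\min\{a,0\}\le 0$. The first step is to recast the left-hand side as a minimum over the pieces $A_j$, after which the ``$\ge$'' direction is trivial and the ``$\le$'' direction follows by simply evaluating the global objective at the unconstrained minimizer of each subproblem.

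First I would record the piecewise identity. Fix $j$ and $\bmx\in A_j$. By the defining property of $I_j$ we have $f_k(\bmx)\le 0$ for $k\in I_j$ and $f_k(\bmx)>0$ for $k\notin I_j$, hence $\min\{f_k(\bmx),0\}=f_k(\bmx)$ for $k\in I_j$ and $\min\{f_k(\bmx),0\}=0$ for $k\notin I_j$. Therefore $f(\bmx)=\sum_{i=1}^n\min\{f_i(\bmx),0\}=\sum_{k\in I_j}f_k(\bmx)$ for every $\bmx\in A_j$, and since $\{A_j\}_{j=1}^m$ partitions $R^d$,
$$
\min_\bmx\sum_{i=1}^n\min\{f_i(\bmx),0\}=\min_j\ \min_{\bmx\in A_j}\ \sum_{k\in I_j}f_k(\bmx).
$$
The inequality ``$\ge$'' is now immediate: for each fixed $j$, enlarging the feasible set from $A_j$ to all of $R^d$ can only decrease the inner minimum, so $\min_{\bmx\in A_j}\sum_{k\in I_j}f_k(\bmx)\ge\min_\bmx\sum_{k\in I_j}f_k(\bmx)$; taking the minimum over $j$ gives $\min_\bmx f(\bmx)\ge\min_j\min_\bmx\sum_{k\in I_j}f_k(\bmx)$.

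For ``$\le$'', fix $j$ and let $\bmx_j^*$ attain $\min_\bmx\sum_{k\in I_j}f_k(\bmx)$ (if the minimum is not attained, carry out the same estimate along a minimizing sequence and replace $\min$ by $\inf$). Splitting the full sum, discarding the terms with index outside $I_j$ (each of which is $\le 0$), and then applying $\min\{a,0\}\le a$ termwise,
$$
f(\bmx_j^*)=\sum_{k\in I_j}\min\{f_k(\bmx_j^*),0\}+\sum_{k\notin I_j}\min\{f_k(\bmx_j^*),0\}\le\sum_{k\in I_j}\min\{f_k(\bmx_j^*),0\}\le\sum_{k\in I_j}f_k(\bmx_j^*).
$$
Since $\min_\bmx f(\bmx)\le f(\bmx_j^*)$, this shows $\min_\bmx f(\bmx)\le\min_\bmx\sum_{k\in I_j}f_k(\bmx)$ for every $j$, and minimizing over $j$ completes the proof.

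The ``obstacle'' here is conceptual rather than technical: the unconstrained minimizer $\bmx_j^*$ of $\sum_{k\in I_j}f_k$ generally does \emph{not} lie in $A_j$ — which is precisely what makes the subproblems tractable — so one must be careful to argue that relaxing the constraint is harmless; the displayed chain does exactly that by showing the relaxed optimal value still upper-bounds the global minimum. Worth flagging in passing: convexity of the $f_i$ plays no role in this proposition and is needed only downstream, to ensure each subproblem $\min_\bmx\sum_{k\in I_j}f_k(\bmx)$ is a solvable convex program; and the assignment of truncation-boundary points $\partial C_i$ to pieces is immaterial, since there $f_i=0$ so $\min\{f_i,0\}=f_i$ whether or not $i$ is placed in $I_j$, leaving the piecewise identity intact.
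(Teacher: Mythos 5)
Your proof is correct and follows essentially the same route as the paper's: the ``$\ge$'' direction comes from the piecewise identity on each $A_j$ plus relaxation of the constraint, and the ``$\le$'' direction from the termwise bounds $\min\{a,0\}\le a$ and $\min\{a,0\}\le 0$ applied at the unconstrained minimizer (the paper phrases this as a chain of inequalities between minima rather than evaluating at $\bmx_j^*$, but the content is identical). Your added remarks about non-attained minima and about boundary points are sound but not needed beyond what the paper records.
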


Based on Proposition~\ref{ignore}, a general framework for minimizing~(\ref{objective}) is to enumerate all the regions $\{A_j\}_{j=1}^m$ and solve a unconstrained convex optimization problem for each region. See Supplementary Algorithm~\ref{alg:general} for details.

\subsection{Implementation in low-dimensional settings}
\label{sec:implemention}

The implementation of the general algorithm described above depends on both the class of functions $\{f_i\}_{i=1}^n$ and the dimension $d$. When $d=1$, each $C_i$ is an interval on the real line and the boundary of $C_i$, $\partial C_i$, is composed of the two end-points of $C_i$, which are the locations where $f_i$ crosses zero. Without loss of generality, assuming that the $2n$ end-points of $\{C_i\}_{i=1}^n$ are all distinct, we can then order them sequentially along the real line which partitions $R$ into $m=2n+1$ fragments $\{A_j\}_{j=1}^m$. We can then go through them one by one sequentially and in the same time keep track of functions entering and leaving the set of untruncated functions on each fragment $A_j$. The detailed procedure for finding the global minimizer of $f(x)$ in 1-D is described in Supplementary Algorithm~\ref{alg:1d}.

When $d=2$, each $C_i$ is a convex region on $R^2$, and its boundary $\partial C_i$ is a curve. One way to enumerate all the $A_j$'s is to travel along each $\partial C_i$, and record the intersection points of $\partial C_i$ and $\partial C_k$ for $ k\neq i$. We then use these intersection points to keep track of functions entering and leaving the set of untruncated functions on each $A_j$. The detailed procedure for finding the global minimizer of $f(\bmx)$ in 2-D is described in Supplementary Algorithm~\ref{alg:2d}.

Using the notations in Section~\ref{sec:notations} and the example in Figure~\ref{fig:example} as an illustration, we start from an arbitrary point $\bmx_{11}$ on $\partial C_1$. On one side we have the region $A_1$, on which there is only one untruncated function ($I_1=\{1\}$). On the other side we have $A_8$, on which every function is truncated ($I_8=\emptyset$). Traveling clockwise, we come across $\partial C_3$. At this point, we add $f_3$, which gives the sets of untruncated functions on $A_2$ ($I_2=\{1,3\}$) and $A_7$ ($I_7=\{3\}$). Similarly, we obtain $I_3=\{1,2,3\}$ and $I_5=\{2,3\}$ when we come across $\partial C_2$. When we come acoss $\partial C_3$ for the second time, we remove $f_3$ from the set of untruncated function and obtain $I_4=\{1,2\}$ and $I_6=\{2\}$. By repeating the process for all $C_i$'s, we enumerate the set of untruncated functions on all $A_j$'s.

What remains to be supplied in the 1-D algorithm are methods to find the end-points of any given $C_i$, and to minimize the sum of a subset of untruncated functions. Similarly, for the 2-D algorithm we need ways to find the intersection points of any given $\partial C_i$ and $\partial C_k$, and to minimize the sum of a subset of untruncated functions. The implementation of these steps depends on the class of functions that we are dealing with. For some function classes, solutions for these steps are either straightforward, or already well-studied. For instance, for quadratic functions, finding the end-points (in 1-D) or finding the intersections (in 2-D) requires solving quadratic equations, for which closed-form solutions exist. Minimizing the sum of a subset of quadratic functions can also be solved in closed-form. For convex shape placement problem described in Section~\ref{sec:shape.application}, published algorithms exist for these steps for commonly encountered convex shapes such as circles or convex polygons~\citep{de2000computational}. For more general convex functions (e.g., those described in Section~\ref{sec:outlier.application} for GLMs), iterative algorithms (e.g., gradient descent or the Newton-Raphson method) can be used for these steps.

\subsection{Extension to high-dimensional settings}
\label{sec:high.dimension}

In three or higher dimensions, our algorithm can be implemented by following the same idea of tracking all the intersection points as in the 2-D case. Essentially, each boundary $\partial C_i$ is a $d-1$ dimensional surface, and enumerating all the $A_j$'s can be achieved by traversing through all the pieces on each $\partial C_i$ that are formed by its intersections with all other $\partial C_k$'s, which is in turn a $d-1$ dimensional problem. For instance, when $d=3$, we need to find all the intersection curves of $\partial C_i$ and $\partial C_k$ (both of which are surfaces) for $i\neq k$, and traverse along each intersection curve while keep tracking all other surfaces $\partial C_j, j\neq i\neq k,$ it crosses. Apparently, this algorithm becomes increasingly complicated and inefficient for larger $d$, which renders it impractical. 

Here, we propose a compromised but efficient extension of our proposed algorithm to high-dimensional settings. The price we pay is to give up the global minimizer, which is sensible choice as Proposition~(\ref{3sat}) has shown that the general problem is NP-Hard. In particular, we propose to solve for an approximate solution using a cyclic coordinate descent algorithm, where we optimize one parameter a time while keeping all other parameters fixed, and cycle through all the parameters until converge. When restricting to only one parameter, the objective function is simply a sum of truncated convex functions in 1D. Therefore, we can use our 1-D algorithm to solve this subproblem in each iteration. This algorithm is guaranteed to converge since the objective function is bounded below and its value is descending after each iteration. See Supplementary Algorithm~\ref{alg:iterative} for details. We will evaluate the performance of this algorithm using both simulated and real data experiments in Section~\ref{sec:image.experiment}.

\subsection{Time complexity analysis}
\label{sec:complexity}

For time complexity analysis of our proposed algorithms, in low-dimensional settings, we can regard the dimension $d$ as a constant. That is, any univariate function of $d$ can be considered as $O(1)$.

For the 1-D algorithm, finding the $2n$ end-points takes $O(nS)$ time, where $S$ is the time for finding the two endpoints of a given function. Ordering the $2n$ end-points takes $O(n\log n)$ time. Traversing through all the end-points takes $O(nT)$ time, where $T$ is the time for minimizing the sum of a subset of untruncated functions. Similarly, for the 2-D algorithm, finding all the intersection points takes $O(n^2S)$ time, where $S$ is the time for finding all the intersection points of any two given functions. Sorting all the intersection points along all the boundaries $\{\partial C_i\}_{i=1}^n$ takes $O(n^2K\log(nK))$ time, where $K$ is the maximum number of intersection points any two boundaries $\partial C_i$ and $\partial C_j$ can have. Traversing through all the intersection points takes $O(n^2KT)$ time.

First, we show that $K=O(1)$ for a large class of truncated convex functions. That is, given any two truncated convex functions in the class, the maximum number of intersection points their boundaries can have is bounded by a constant.
\begin{definition}
For any positive integer $k\in Z^+$, a class of curves $\mathcal{C}$ in $R^2$ is said to be $k$-intersecting if and only if for any two distinct curves in $\mathcal{C}$, the number of their intersection points is at most $k$. 
\end{definition}
\begin{definition}
A class of truncated functions in $R^2$ is said to be $k$-intersecting if and only if the set of their truncation boundaries is $k$-intersecting. 
\end{definition}
\begin{example}
\label{ex:quad1}
The class of truncated quadratic functions in $R^2$ with positive definite Hessian matrices is $k$-intersecting with $k=4$. This is easy to see given the facts that the truncation boundary of a quadratic function in $R^2$ with positive definite Hessian matrix is an ellipse, and two distinct ellipses can have at most four intersection points.
\end{example}
In fact, according to B\'ezout's theorem, the number of intersection points of two distinct plane algebraic curves is at most equal to the product of the degrees of the corresponding polynomials. Therefore, a class $\mathcal{F}$ of truncated bivariate polynomials is $k^2$-intersecting if for any function $f\in\mathcal{F}$ its untruncated version is a polynomial of degree at most $k$.

While $S$ and $T$ depend on the class of functions that we are dealing with, for some function classes, we have $S=O(1)$ and $T=O(1)$. That is, they both take constant time.
\begin{example}
\label{ex:quad2}
For quadratic functions with positive definite Hessian matrices, $T=O(1)$. This is easy to see given the following three facts:
\begin{enumerate}
\item Given $n$ quadratic functions $f_i=\frac12\bmx^T\bmA_i\bmx+\bmb_i^T\bmx+c_i, i=1,\ldots,n$, their sum is $\sum_if_i(\bmx)=\frac12\bmx^T\bmA\bmx+\bmb^T\bmx+c$, where $\bmA=\sum_i\bmA_i, \bmb=\sum_i\bmb_i$, and $c=\sum_ic_i$, which is also a quadratic function.
\item To update the sum of quadratic functions when adding a new function to the sum or removing an existing function from the sum, we only need to update $\bmA, \bmb$ and $c$, which takes $O(1)$ time (it is in fact $O(d^2)$ time but can be simplified as $O(1)$ time since we consider $d$ as a constant in low-dimensional settings).
\item The minimizer of any quadratic function $\frac12\bmx^T\bmA\bmx+\bmb^T\bmx+c$ with positive definite Hessian matrix is $-\bmA^{-1}\bmb$, which takes $O(1)$ time to compute (it is in fact $O(d^3)$ time but can be simplified as $O(1)$ time since we consider $d$ as a constant in low-dimensional settings).
\end{enumerate}
Furthermore, $S=O(1)$, since all the intersection points (up to four of them) of any two given ellipses can be found using closed-form formulas~\citep{richter2011perspectives}.
\end{example}
Putting Examples~\ref{ex:quad1} and~\ref{ex:quad2} together, we know that the running time of the 1-D algorithm for sum of truncated quadratic functions with positive definite Hessian matrix is $O(n\log n)$, and the running time of the 2-D Algorithm for sum of truncated quadratic functions with positive definite Hessian matrix is $O(n^2\log n)$. The time complexity analysis for other class of functions can be conducted similarly.

In high-dimensional settings, however, the running time of the general algorithm will be at least $O(n^d\log n)$, where $d$ is the dimension. In another word, the running time grows exponentially as the dimension increases, which is typical for NP-Hard problems. It is easy to see that the running time of the cyclic coordinate descent algorithm is $O(kdn\log n)$, where $k$ is the number iterations to converge, and $O(dn\log n)$ is the time for each round of $d$ one-dimensional updates.

\section{Experiments}
\label{sec:experiments}

\subsection{Outlier detection in simple linear regression} 
\label{sec:outlier.experiment}

We simulate data for outlier detection in simple linear regression as described in Section~\ref{sec:outlier.application} and compare the performance of our proposed method with the $\Theta$-IPOD algorithm~\citep{she2012outlier} and three other robust estimation methods: MM-estimator~\citep{yohai1987high}, least trimmed squares (LTS)~\citep{rosseeuw1987robust} and~\citet{gervini2002class} one-step procedure (denoted as GY). Our goal is to estimate the regression coefficients and identify the outliers with $\sigma$ assumed to be 1. In other words, we try to estimate $\bmbeta$ and $\bmgamma$ in (\ref{outlier}). Given $n$ observations and $k$ outliers, let $\bmX=[\bmone_n, (\bmx_1,\ldots,\bmx_n)^T]$,  $\bmbeta=(\beta_0, \beta_1)^T=(1,2)^T$, and $L$ be a parameter controlling the leverage of the outliers. When $L>0$, $x_i$ is drawn from $uniform(L,L+1)$ for $i=1,\ldots,k$, and from $uniform(-15,15)$ for $i=k+1,\ldots,n$. $\bmgamma=(\gamma_1,\ldots,\gamma_n)^T$ represents deviations from the means, and each $\gamma_i$ is drawn from $exponential(0.1)+3$ for $i=1,\ldots,k$, and $\gamma_i=0$ for $i=k+1,\ldots,n$. Based on a popular choice for $\sqrt{\lambda}$ as $2.5\hat{\sigma}$~\citep{she2012outlier,wilcox2005robust,maronna2006robust}, we set $\sqrt{\lambda}$ as $2.5$. 

We simulate $100$ independent data sets, each with $100$ observations (i.e., $n=100$). The results are shown in Figure~\ref{fig:outlier} and Supplementary Table~\ref{tab:outlier}. The performance of each method is evaluated by the masking probability and the swamping probability under two scenarios: (i) No $L$ applied (denotes as $L=0$), that is, $x_i$ is drawn from $uniform(-15,15)$ for $i=1,\ldots,n$, and (ii) $L=20$. Masking probability, as in~\citet{she2012outlier}, is defined as the proportion of undetected true outliers among all outliers. Swamping probability, on the other hand, is the fraction of normal observations recognized as outliers. We can see that the proposed method outperforms others, especially when the number of outliers is high.

\begin{figure}
\begin{center}
\includegraphics[width=0.8\textwidth]{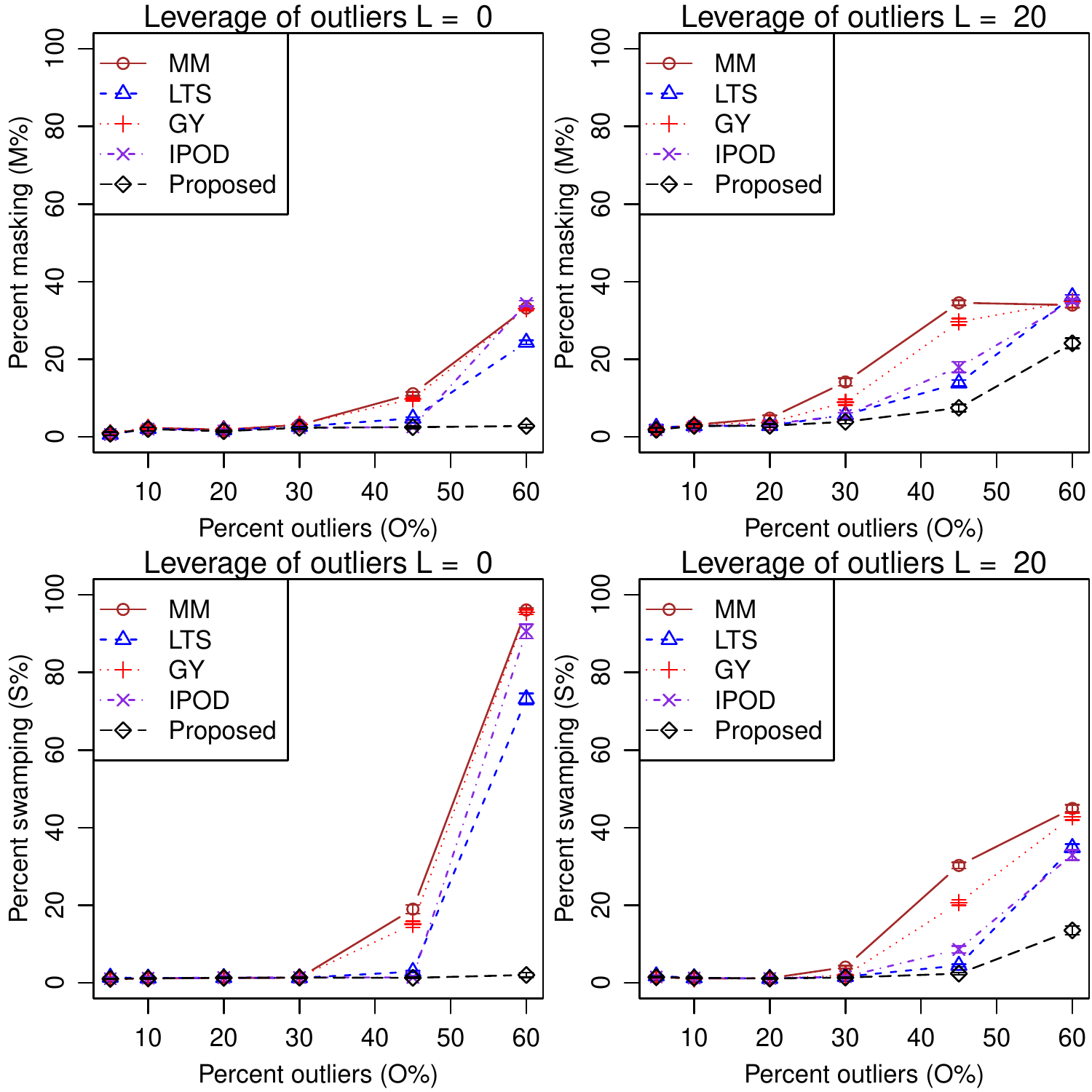}
\end{center}
\caption{Comparison of different methods for outlier detection in simple linear regression. The figures show the mean percents of masking (top) and swamping (bottom) for different leverages of outliers: $L=0$ (left) and $L=20$ (right) and differnt percents of outliers (O\%) for all the methods using 100 simulated replicates. The standard errors of the means are shown as error bars. \label{fig:outlier}}
\end{figure}

\subsection{Sum of truncated quadratic functions}
\label{sec:quadratic.experiment}
 
We simulate sum of truncated quadratic functions with positive definite Hessian matrix in $R^2$ and compare the performance of the proposed algorithm with several other competing algorithms including a global search algorithm (the DIRECT algorithm)~\citep{jones1993lipschitzian} and a branch-and-bound global optimization algorithm (StoGO)~\citep{madsen1998new} both implemented in R package \textsf{nloptr}, a generalized simulating annealing algorithm (SA) implemented in R package \textsf{GenSA}~\citep{xiang2013generalized}, a particle swarm optimization algorithm (PSO) implemented in R package \textsf{hydroPSO}~\citep{zambrano2013model}, as well as the difference of convex functions (DC) algorithm~\citep{an1997solving} which has been used to solve problems with truncated convex functions~\citep{shen2012likelihood,chen2016personalized}. We implement the DC algorithm in R (see Supplementary Section~\ref{sec:DC} for details). 

Following~\citep{hendrix2010introduction}, we compare the performance of all the algorithms in terms of their effectiveness in finding the global minimum. We measure effectiveness by the success rate, where a success for a given algorithm in a given run is defined as having the estimated minimum no greater than any other algorithms by $10^{-5}$. This tolerance value is allowed to accommodate numerical precision issues. We set a maximum number of $10^4$ function evaluations, a maximum number of $10^4$ iterations and a convergence tolerance level of $10^{-8}$ for all competing algorithms whenever possible. See Supplementary Table~\ref{tab:settings} for details.

We randomly generate truncated quadratic functions in $R^2$ with varying degrees of complexity. Specifically, given a quadratic function with positive definite Hessian matrix in $R^2$ truncated at zero, the truncation boundary is an ellipse. Let $a$ and $b$ be the lengths of the two axes of the ellipse, $u$ and $v$ be the x and y coordinates of the center of the ellipse, $\theta$ be the angle between the long axis of the ellipse and the x axis, and $-z$ be the lowest value of the function. For simplicity, we use a single tuning parameter $C$ to control the complexity of the objective function. The larger the $C$, the more local minima the objective function will have. Examples of objective functions with different values of $C$ are given in Figure~\ref{fig:C}. In particular, we randomly sample $\theta$ from $uniform(0, \pi)$, $a$ from $uniform(0.01, 0.5)/C$, $b$ from $uniform(0.01, 0.5)$, $u$ and $v$ from $uniform(0,1)$ and $z$ from $uniform(-10,-1)$. We simulate three scenarios where $C$ is $1$, $5$, and $10$, respectively, and we compute the coefficients of the corresponding quadratic functions based on the above six parameters. For each value of $C$, we simulate $100$ independent data sets each with $50$ random quadratic functions (i.e., $n=50$) truncated at $\lambda=0$.

\begin{figure}
\begin{center}
\includegraphics[width=1\textwidth]{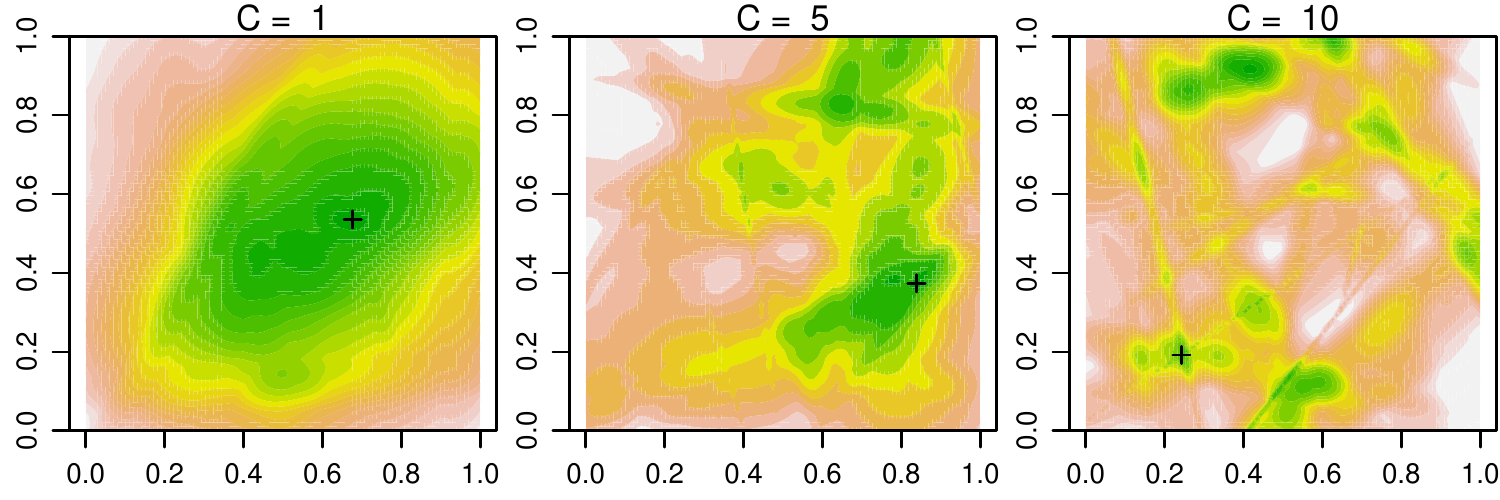}
\end{center}
\caption{Contour plots of randomly generated sum of truncated quadratic functions in $R^2$. Global minima are marked with the plus sign. \label{fig:C}}
\end{figure}

The performance of the proposed algorithm and other competing algorithms are shown in Figure~\ref{fig:quadratic2D} and Supplementary Table~\ref{tab:quadratic.2D}. We can see that our proposed algorithm has a success rate of $100\%$ regardless the value of $C$, as it guarantees to find the global minimizer. For all other competing algorithms, their success rates decline when $C$ increases.

\begin{figure}[h]
\begin{center}
\includegraphics[width=0.65\textwidth]{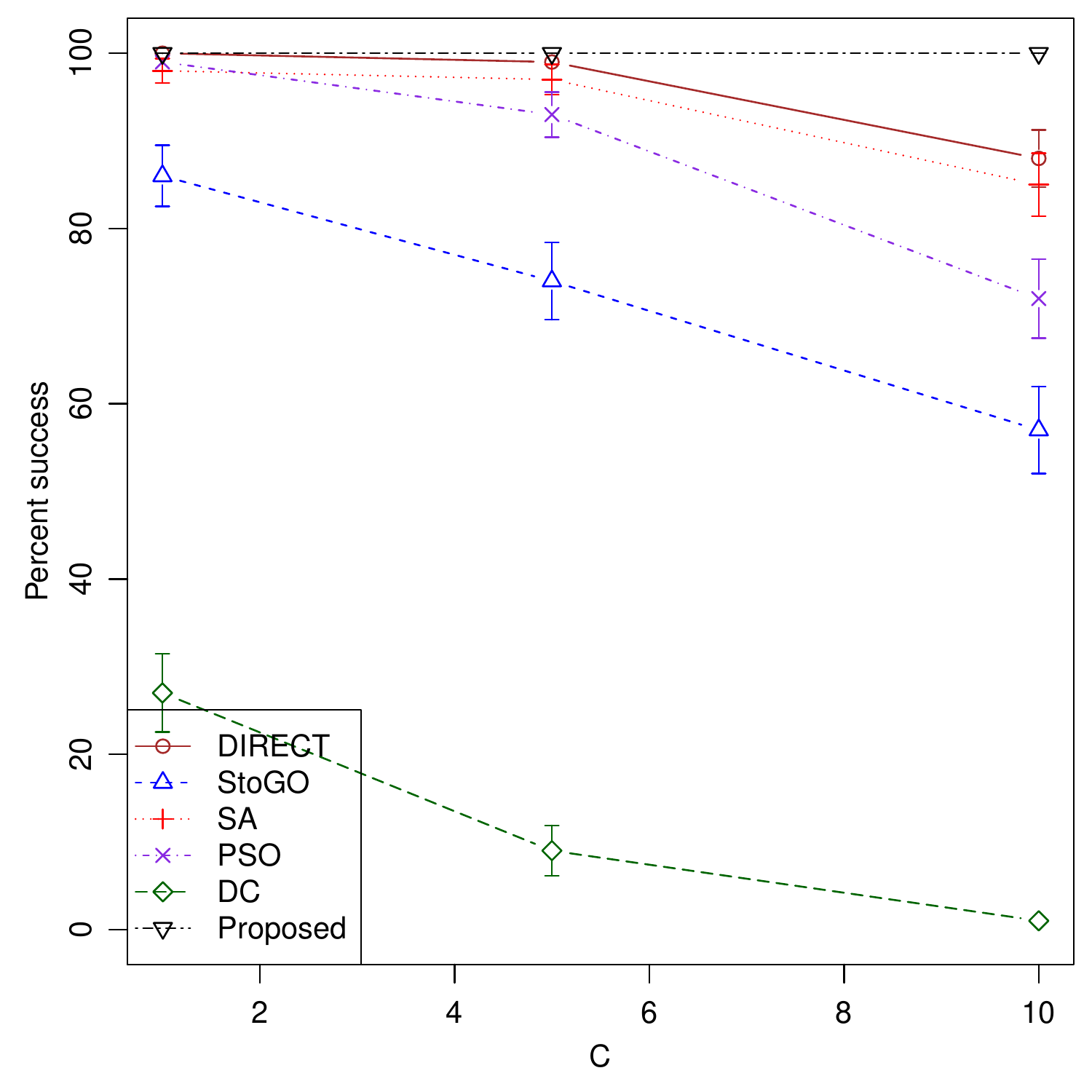}
\end{center}
\caption{Comparison of different algorithms for minimizing the sum of $50$ randomly generated truncated quardratic funstions in 2-D. The figure shows the mean success rates (in percents) for all the methods using 100 simulated replicates for different complexities of the functions ($C$). The standard errors of the means are shown as error bars. \label{fig:quadratic2D}}
\end{figure}

\subsection{Convex shape placement}
\label{sec:shape.experiment}

Following Section~\ref{sec:shape.application}, we randomly sample $30$ points (i.e., $n=30$) uniformly from the $[0,1]\times[0,1]$ unit square, and use our proposed algorithm to find a location to place $S$ such that it covers the maximum number of points. To demonstrate the generality of our proposed algorithm, we consider three shapes here: circle, square and hexagon. The results are shown in Supplementary Figure~\ref{fig:shape}.

\subsection{Signal and image restoration}
\label{sec:image.experiment}

Following Section~\ref{sec:image.application}, we simulate 1-D signal with additive Gaussian noise, and compare the performance of the proposed algorithm with several other algorithms including DIRECT, StoGO, SA, PSO (See Section~\ref{sec:quadratic.experiment} for more details of these algorithms) and a recently published iterative marginal optimization (IMO) algorithm~\citep{portilla2015efficient}, which was specifically designed for signal and image restoration. We implement the IMO algorithm in R (see Supplementary Section~\ref{sec:IMO} for details). The DC algorithm turns out to be numerically equivalent to the IMO algorithm, but much slower. Therefore, we did not included the DC algorithm in the comparison, and simply named the IMO algorithm as IMO/DC. 

The data are simulated by adding random Gaussian noise sampled i.i.d. from $N(0,1)$ to an underlying true signal. Each data set contains $100$ data points equally spaced on the interval $[0,1]$. The true signal is design to be piece-wise smooth with different pieces being constant, linear, quadratic or sine waves (see Figure~\ref{fig:signal}). All the algorithms are used to restore the signal by minimizing the following objective function,
$$\hat{\bmy}=\argmin_{\hat{\bmy}}\sum_{i=1}^d(\hat{y}_i-y_i)^2+w\sum_{i=1}^{d-1}\min\{(\hat{y}_i-\hat{y}_{i+1})^2, \lambda\},$$
where $d=100, y_i$ and $\hat{y}_i, i=1,\ldots,d,$ are the observed and restored values at data point $i$, respectively. That is, we are solving the sum of $199$ truncated quadratic functions ($99$ of them are truncated at $\lambda$, and the remaining $100$ of them are truncated at infinity) in a $100$-dimensional parameter space. The tuning parameters are empirically set as $w=4$ and $\lambda = 9$, respectively. 

\begin{figure}
\begin{center}
\includegraphics[width=0.8\textwidth]{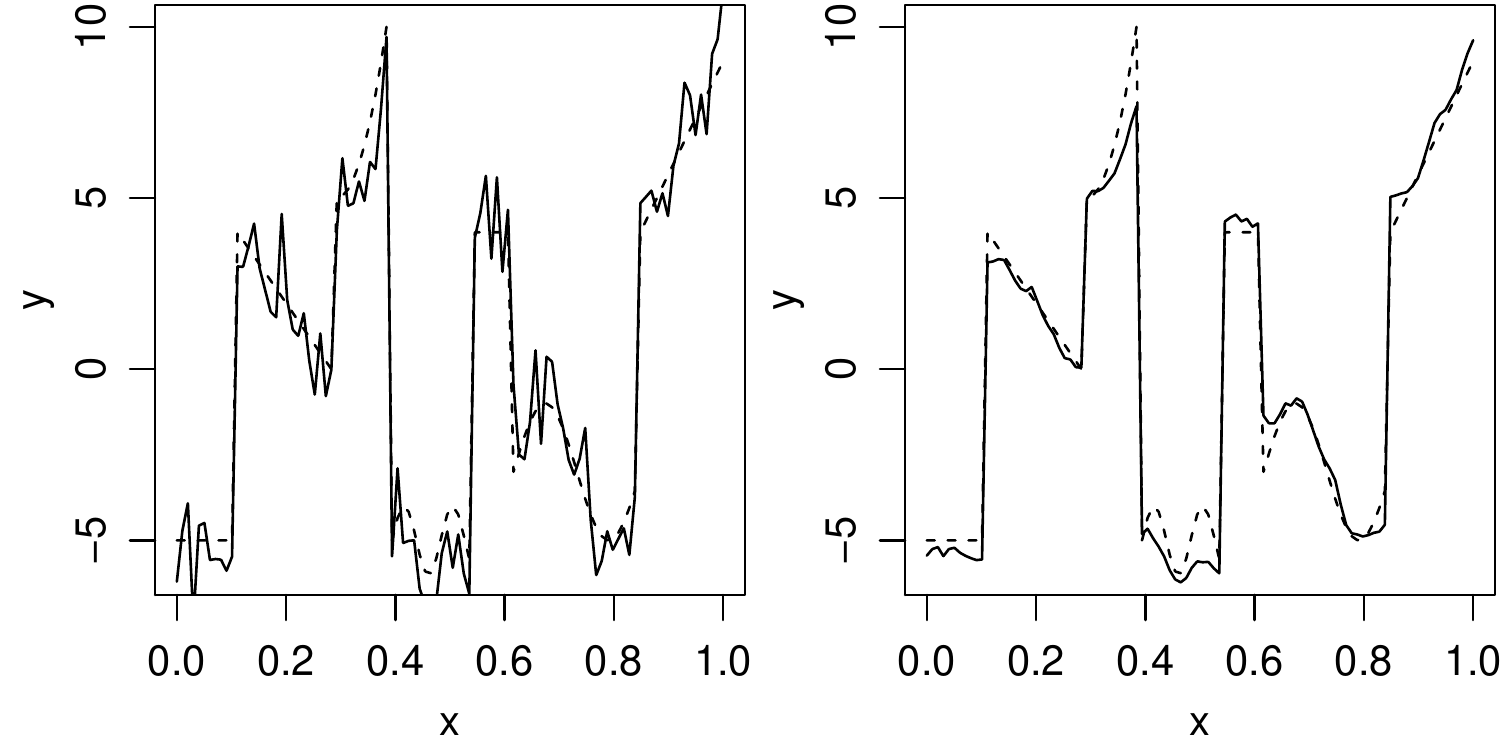}
\end{center}
\caption{Simulated random signal (left) and restored signal (right) are shown in solid lines. The underlying true signal are shown in dashed lines. \label{fig:signal}}
\end{figure}

We measure the performance of these algorithms using four different metrics:
\begin{enumerate}
	\item Success rate, which is defined in Section~\ref{sec:quadratic.experiment}. Note a success here only means that a given algorithm has found the best solution among all algorithms, which may or may not be the global minimizer.
	\item Relative loss, which is defined as $|f(\hat{\bmy})-f(\bmy^*)|/|f(\bmy^*)|$, where $\hat{\bmy}$ and $\bmy^*$ are the solution found by a given algorithm and the best solution found by all algorithms, respectively.
	\item Root mean square error (RMSE), which is defined as $\sqrt{d^{-1}\sum_{i=1}^d(\hat{\bmy}_i-\tilde\bmy_i)^2}$, where $\hat{\bmy}$ and $\tilde\bmy$ are the solution found by a given algorithm and the underlying true signal, respectively.
	\item Running time, measured in seconds.
\end{enumerate}
The performance of the proposed algorithm and other competing algorithms are summarized in Table~\ref{tab:signal}. In general, the proposed algorithm outperforms all other methods in terms of success rate, relative loss and RMSE. It is also significantly faster than all other algorithms.

\begin{table}
\caption{Comparison of different algorithms for signal restoration. The table shows the mean success rates (in percents), relative losses, root mean square errors (RMSE), as well as running times (in seconds) for all the methods using 100 simulated replicates. The standard errors of the means are given in parentheses. \label{tab:signal}}
\centering
\begin{tabular}{ccccccc}
  \hline
 & DIRECT & StoGO & SA & PSO & IMO/DC & Proposed \\ 
  \hline
Success rate &  0.0 (0.0) &  8.0 (2.7) & 52.0 (5.0) &  0.0 (0.0) & 28.0 (4.5) & 84.0 (3.7) \\ 
  Relative loss & 0.08 (0.00) & 0.05 (0.02) & 0.04 (0.01) & 0.17 (0.01) & 0.10 (0.01) & 0.01 (0.00) \\ 
  RMSE & 0.66 (0.01) & 0.56 (0.01) & 0.59 (0.01) & 0.63 (0.01) & 0.56 (0.01) & 0.55 (0.01) \\ 
  Time &  0.40 (0.00) & 61.29 (0.27) &  0.31 (0.00) & 12.39 (0.13) &  1.50 (0.07) &  0.04 (0.00) \\ 
   \hline
\end{tabular}
\end{table}

Finally, we apply the proposed algorithm for image restoration. Both synthetic and real images are used for this experiment (see Figure~\ref{fig:image} and Supplementary Figure~\ref{fig:image.more}). All images are resized to $256\times256$, converted to gray scale and normalized to have pixel intensity levels in $[0,1]$. Independent Gaussian noise sampled from $N(\mu=0, \sigma^2=0.01)$ is added to each pixel, and the proposed algorithm is used to restore the original image via minimizing the following objective function,
$$\hat{\bmz}=\argmin_{\hat{\bmz}}\sum_{i\in I}(\hat{z}_i-z_i)^2+w\sum_{i,j\in I, i\in D(j)}\min\{(\hat{z}_i-\hat{z}_j)^2, \lambda\},$$
where $z_i$ and $\hat{z}_i, i\in I,$ are the observed and restored intensity values at pixel $i$, respectively, $i\in D(j)$ means that pixels $i$ and $j$ are neighbors of each other, and the tuning parameters are empirically set as $w=2$ and $\lambda = 0.02$, respectively. From Figure~\ref{fig:image} and Supplementary Figure~\ref{fig:image.more}, we see that compared with Gaussian smoothing, the proposed algorithm can restore the smoothness in the image while maintaining the sharp edges. Even though this problem has a dimension of $d=256\times256=65,536$ and the number of truncated quadratic functions is $n=256\times256+2\times256\times255=196,096$, it only takes about $10$ seconds for our algorithm to converge.

\begin{figure}
\begin{center}
\includegraphics[width=1\textwidth]{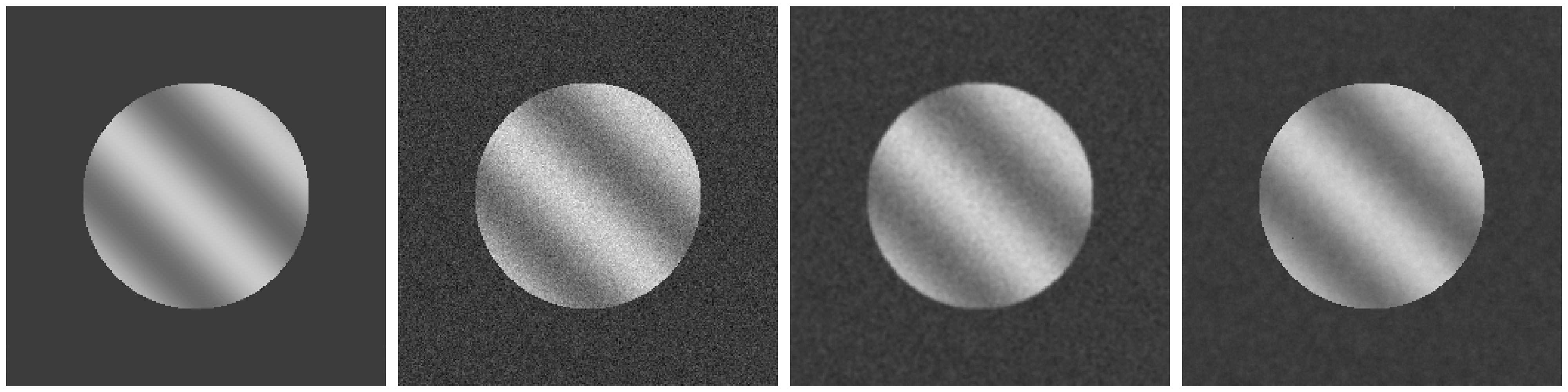}
\includegraphics[width=1\textwidth]{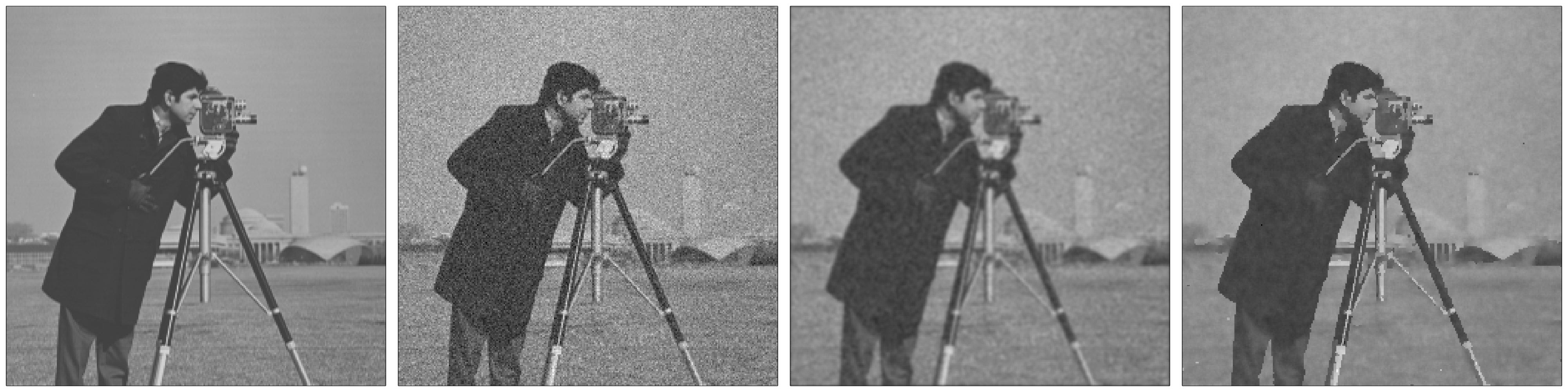}
\end{center}
\caption{Restoration of synthetic and real images. For each row, from left to right: original image, image with Gaussian noise added, image restored using Gaussian smoothing with a $5\times5$ kernel and image restored using proposed algorithm. \label{fig:image}}
\end{figure}

\section{Discussion}
\label{sec:discussion}

We know that summing convex functions together still gives us a convex function. Although simply truncating the function at a given level does not seem to add much complexity to a convex function, the sum of truncated convex functions is not in the same class as its summands, which makes it very powerful and flexible in modeling various kinds of problems, as several examples given in Section~\ref{sec:applications}. Figure~\ref{fig:C} further demonstrates the diverse landscape that can be achieved by a sum of truncated quadratic functions. This flexibility is supported by Proposition~\ref{3sat}, which implies that any problem in the class of NP can be reduced to the minimization of a sum of truncated convex functions. A potential future work is to approximate a given non-convex function by a sum of truncated quadratic functions and then use our proposed algorithm to minimize it.

In the cyclic coordinate descent algorithm, instead of performing a univariate update in each round, we can also perform a bivariate update in each round using the 2-D algorithm (i.e., using a block coordinate descent algorithm), which may help increase the chance of finding the global minimizer, at the cost of more intensive computation.

Besides the applications described in this paper, minimizing sum of truncated convex functions also has many other applications, such as detecting differential gene expression~\citep{jiang2014unit} (See Supplementary Section~\ref{sec:DE.application}) and personalized dose finding~\citep{chen2016personalized}. This paper demonstrates that the proposed algorithm can be quite efficient when the truncation boundaries of the class of convex functions are simple shapes such as ellipse and convex polygon, which cover the cases of truncated quadratic functions and truncated $\ell_1$ penalty (TLP~\citep{shen2012likelihood}). Although these functions are seemingly limited, their applications are vastly abundant, and we have shown only a few selected examples in this paper. In our future work, we will investigate the application of our proposed algorithm to other classes of convex functions.

R programs for reproducing the results in this paper are available at \url{http://www-personal.umich.edu/~jianghui/stcf/}.

\section*{Supplementary materials}

Supplementary texts, algorithms, proofs, figures and tables. (supplementary.pdf)

\section*{Acknowledgements}

We thank the two anonymous reviewers and the associate editor for their suggestions on the image restoration application and the extension to high-dimension settings. Their comments and suggestions have helped us improve the quality of this paper substantially.

\bibliographystyle{Chicago}
\bibliography{manuscript}

%%%%%%%%%% Merge with supplemental materials %%%%%%%%%%
\clearpage

\begin{center}
\textbf{\large Supplementary Materials for ``Minimizing Sum of Truncated Convex Functions and Its Applications''}
\end{center}

\newcommand{\beginsupplement}{%
        \setcounter{section}{0}
        \renewcommand{\thesection}{S\arabic{section}}%
        \setcounter{table}{0}
        \renewcommand{\thetable}{S\arabic{table}}%
        \setcounter{figure}{0}
        \renewcommand{\thefigure}{S\arabic{figure}}%
        \setcounter{algorithm}{0}
        \renewcommand{\thealgorithm}{S\arabic{algorithm}}%
        \setcounter{equation}{0}
        \renewcommand{\theequation}{S\arabic{equation}}%
     }

\beginsupplement

\section{Supplementary texts}

\subsection{Application on detecting differential gene expression with $\ell_0$-penalized models}
\label{sec:DE.application}

The idea of using the $\ell_0$ penalty for variable selection can also be applied to the detection of differentially expressed genes from RNA sequencing data. The problem is discussed in detail in~\citet{jiang2014unit}, and we briefly summarize the approach here. Given $S$ experimental groups each with $n_s$ biological samples, we would like to compare the expression levels of $m$ genes measured in the samples. Let $\mu_{si}$ be the mean expression level of gene $i$ (on the log-scale) in group $s$, $d_{sj}$ be the scaling factor (e.g., sequencing depth or library size on the log-scale) for sample $j$ in group $s$, and $\sigma^2_i$ be the variance of expression level of gene $i$ (on the log-scale). Assuming a linear model on the observed data $x_{sij} \sim N(\mu_{si} + d_{sj}, \sigma^2_{i})$, the problem is to identify genes that are differentially expressed across the groups. To do so, assuming $\{\sigma_i\}_{i=1}^m$ are known, reparametrizing $\mu_{si}$ as $\mu_i=\mu_{1i}, \gamma_{si}=\mu_{si}-\mu_{1i}, s=1,\ldots,S$, the $\ell_0$-penalized  negative log-likelihood function of the model is
\beq
\label{DE}
f(\mu,\gamma,d)=\displaystyle\sum_{i=1}^m\frac1{2\sigma_i^2}\sum_{s=1}^S\sum_{j=1}^{n_s}(x_{sij}-\mu_i-\gamma_{si}-d_{sj})^2+\sum_{i=1}^m\alpha_i1(\sum_{s=1}^S|\gamma_{si}|>0)
\eeq
Where $\{\alpha_i\}_{i=1}^m$ are tuning parameters. It is shown in~\citet{jiang2014unit} that~(\ref{DE}) can be solved as follows
$$
\begin{array}{l}
	d^\prime_{sj} = (\sum_{i=1}^m(x_{sij}-x_{si1})/\sigma_i^2)/(\sum_{i=1}^{m}1/\sigma_i^2), s=1,\ldots,S\\
	\mu^\prime_{si} = (1/n_s)\sum_{j=1}^{n_s}(x_{sij}-d^\prime_{sj}) , s=1,\ldots,S\\
	%\lambda=\sqrt{\frac{(n_1+n_2)\alpha}{n_1n_2}}\\
	d_1=0\\
        d_2,\ldots,d_S=\displaystyle\argmin_{d_2,\ldots,d_S}\sum_{i=1}^m\min\left(g(d_2,\ldots,d_S), \alpha_i\right)\\
        \mbox{where } \displaystyle g(d_2,\ldots,d_S)=\frac1{2\sigma_i^2}\left\{\sum_{s=1}^Sn_s(\mu_{si}^\prime-d_s)^2-\frac1n\left[\sum_{s=1}^S(n_s(\mu_{si}^\prime-d_s))\right]^2\right\}\\
	d_{sj}=d_s+d^\prime_{sj} , s=1,\ldots,S\\
	$$
	\gamma_{si} = \left\{
	\begin{array}{ll}
		0  &\mbox{ if } g(d_2,\ldots,d_S)<\alpha_i\\
		\mu^\prime_{si}-\mu^\prime_{1i}-d_s &\mbox{ otherswise}		
	\end{array}\right.
	$$ \\
	\mu_i = \left\{
	\begin{array}{ll}
		$$(1/n)\sum_{s=1}^Sn_s(\mu^\prime_{si}-d_s)$$ &\mbox{ if }  g(d_2,\ldots,d_S)<\alpha_i\\
		\mu^\prime_{1i} &\mbox{ otherwise}\\
	\end{array}\right.\\
\end{array}
$$
where the only computationally intensive step is to minimize a sum of truncated quadratic functions in $d_2,\ldots,d_S$
$$ d_2,\ldots,d_S=\displaystyle\argmin_{d_2,\ldots,d_S}\sum_{i=1}^m\min\{g(d_2,\ldots,d_S), \alpha_i\}.
$$
Methods for choosing $\{\alpha_i\}_{i=1}^m$ and for estimating $\{\sigma^2_i\}_{i=1}^m$, as well as experiments on simulated and real data, are given in~\citet{jiang2014unit}.

\subsection{Algorithms described in Section~\ref{sec:algorithm}}

\begin{algorithm}[H]
 \caption{A general algorithm for minimizing~(\ref{objective}).\label{alg:general}}
  \begin{algorithmic}
  \Procedure{algorithm.general}{$f_1,\ldots,f_n$}
  	\For{$i=1:n$}
  		\State Find region $C_i$ such that $f_i(\bmx)\leq0$ on $C_i$.
  	\EndFor
	\State Find all the pieces $\{A_j\}_{j=1}^{m}$ in the partition of $R^d$ formed by $\{C_i\}_{i=1}^n$.
	\State $s\leftarrow0$.
	\For{$j=1:m$}
		 \State Find the set of functions $\{f_k\}_{k\in I_j}$ that are not truncated on $A_j$.	
 		 \State $s\leftarrow\displaystyle\min\{s,\min_\bmx\sum_{k\in I_j}f_k(\bmx)\}$.
	\EndFor 
 	\State \textbf{return} $s$. 
  \EndProcedure
  \end{algorithmic}
\end{algorithm}

\begin{algorithm}[H]
 \caption{An algorithm for minimizing~(\ref{objective}) in 1-D.\label{alg:1d}}
  \begin{algorithmic}
  \Procedure{algorithm.1d}{$f_1,\ldots,f_n$}
  	\For{$i=1:n$}
  		\State Find the interval $C_i=[l_i, r_i]\subset R$ such that $f_i(x)\leq0$ on $C_i$.
  	\EndFor
  	\State Order all the $2n$ end-points of $\{C_i\}_{i=1}^n$ along the real line as $p_1<\cdots<p_{2n}$.
	\State $s\leftarrow0, I\leftarrow\emptyset$.
	\For{$j=1:2n$}
		\If{$p_j$ is the left end-point of an interval $C_k$}
			\State Add $k$ to set $I$.
		\ElsIf{$p_j$ is the right end-point of an interval $C_k$}
			\State Remove $k$ from set $I$.
		\EndIf
 		 \State $s\leftarrow\displaystyle\min\{s,\min_x\sum_{k\in I}f_k(x)\}$.
	\EndFor 
 	\State \textbf{return} $s$. 
  \EndProcedure
  \end{algorithmic}
\end{algorithm}

\begin{algorithm}[H]
 \caption{An algorithm for minimizing~(\ref{objective}) in 2-D.\label{alg:2d}}
  \begin{algorithmic}
  \Procedure{algorithm.2d}{$f_1,\ldots,f_n$}
  	\For{$i=1:n$}
  		\State Find $C_i\subset R^2$ such that $f_i(\bmx)\leq0$ on $C_i$.
  		\State Find $\partial C_i$, the boundary $C_i$.
  	\EndFor
	\State $s\leftarrow0$.
	\For{$i=1:n$}
		\State Find all the intersection points of $\partial C_i$ and $\partial C_k, k\neq i$.
	  	\State Sort all the intersection points along $\partial C_i$ clockwise as $\bmp_1,\ldots,\bmp_{n_i}$.
		\State Find a point $\bmp$ between $\bmp_1$ and $\bmp_{n_i}$ on $\partial C_i$.
		\State $I\leftarrow\{k:\bmp\in C_k\}, J\leftarrow I\setminus\{i\}$.
		\For{$j=1:n_i$}
			\If{$\bmp_j$ is the intersection point of $\partial C_i$ and $\partial C_k$ and $k\in I$}
				\State Remove $k$ from sets $I$ and $J$.
			\ElsIf{$\bmp_j$ is the intersection point of $\partial C_i$ and $\partial C_k$ and $k\not\in I$}
				\State Add $k$ to sets $I$ and $J$.
			\EndIf
 		\State $s\leftarrow\displaystyle\min\{s,\min_\bmx\sum_{k\in I}f_k(\bmx),\min_\bmx\sum_{k\in J}f_k(\bmx)\}$.
		\EndFor
	\EndFor
 	\State \textbf{return} $s$. 
  \EndProcedure
  \end{algorithmic}
\end{algorithm}

\begin{algorithm}[H]
 \caption{A cyclic coordinate descent algorithm for minimizing~(\ref{objective}) in high-dimensional settings.\label{alg:iterative}}
  \begin{algorithmic}
  \Procedure{algorithm.high-d}{$f_1,\ldots,f_n$}
  	\State Initialize $\bmx$ as $\bmx_0$.
  	\While{true}
	  	\For{$j=1:d$}
  			\State Fix all $x_k, k\neq j$, minimize the objective function as a univariate function of $x_j$ using Algorithm~\ref{alg:1d}.
	  	\EndFor
	  	\If{the change in $\bmx$ since the last iteration is less than a given tolerance level}
		 	\State \textbf{return} $\bmx$.
		\EndIf
  	\EndWhile
  \EndProcedure
  \end{algorithmic}
\end{algorithm}

\subsection{The $\Theta$-IPOD algorithm for robust linear regression}
\label{sec:IPOD}

\begin{algorithm}[H]
 \caption{The $\Theta$-IPOD algorithm for robust linear regression, adapted from Algorithm~2 in~\citet{she2012outlier}.\label{alg:IPOD}}
  \begin{algorithmic}
  \Procedure{$\Theta$-IPOD}{$\bmX\in R^{n\times p}, \bmy\in R^n, \bmlambda>0\in R^n, \bmgamma^{(0)}\in R^p$, and threshold operator $\Theta(\cdot; \cdot)$ which is taken as the hard-threshold operator $\Theta_h(\cdot; \cdot)$ in our paper}
	\State $\bmgamma\leftarrow\bmgamma^{(0)},\bmH\leftarrow\bmX(\bmX^T\bmX)^{-1}\bmX^T,\bmr\leftarrow\bmy-\bmH\bmy$.
  	\While{true}
  	 	\State $\bmgamma\leftarrow\Theta_h(\bmH\bmgamma+\bmr; \sqrt{\bmlambda})$.
	  	\If{the change in $\bmgamma$ since the last iteration is less than a given tolerance level}
		 	\State \textbf{return} $\hat\bmgamma\leftarrow\bmgamma$ and $\hat\bmbeta\leftarrow(\bmX^T\bmX)^{-1}\bmX^T(\bmy-\hat{\bmgamma})$.
		\EndIf
  	\EndWhile
  \EndProcedure
  \end{algorithmic}
\end{algorithm}

\subsection{The difference of convex (DC) functions algorithm}
\label{sec:DC}

Following~\citet{an1997solving}, we rewrite our objective function for sum of truncated quadratic functions
$$f(\bmx)=\sum_{i=1}^n\min\left(\frac12\bmx^T\bmA_i\bmx+\bmx^T\bmb_i+c_i, \lambda_i\right)$$
as $f(\bmx)=f_1(\bmx)-f_2(\bmx)$, where
$$f_1(\bmx)=\sum_{i=1}^n\frac12\bmx^T\bmA_i\bmx+\bmx^T\bmb_i+c_i$$
is a quadratic function, and
$$f_2(\bmx)=\sum_{i=1}^n\left(\frac12\bmx^T\bmA_i\bmx+\bmx^T\bmb_i+c_i-\lambda_i\right)_+.$$
Then, the DC algorithm iteratively minimizes a convex majorization of $f(\cdot)$ by replacing $f_2(\cdot)$ with its linear approximation at $\bmx^k$, until converge. That is,
$$\bmx^{k+1}=\argmin_\bmx\left\{f_1(\bmx)-[\nabla f_2(\bmx^k)]^T(\bmx-\bmx^k)\right\},$$
where $\nabla f_2(\bmx^k)$ is the gradient of $f_2(\bmx)$ evaluated at $\bmx^k$, and we have
$$\nabla f_2(\bmx^k)=\sum_{i=1}^n1\left(\frac12\bmx^T\bmA_i\bmx+\bmx^T\bmb_i+c_i>\lambda_i\right)(\bmA_i\bmx^k+\bmb_i).$$
Therefore,
$$\bmx^{k+1}=\argmin_\bmx\left\{\sum_{i=1}^n\frac12\bmx^T\bmA_i\bmx+\bmx^T\bmb_i+c_i-\sum_{i=1}^n1\left(\frac12\bmx^T\bmA_i\bmx+\bmx^T\bmb_i+c_i>\lambda_i\right)(\bmA_i\bmx^k+\bmb_i)^T(\bmx-\bmx^k)\right\}$$
for which we only need to minimize a quadratic function, and the solution exists in closed-form.

\subsection{The iterative marginal optimization (IMO) algorithm for signal and image restoration}
\label{sec:IMO}

Following~\citet{portilla2015efficient}, we rewrite our objective function
$$f(\bmx)=\sum_{i=1}^d(x_i-y_i)^2+w\sum_{i=1}^n\min\{x_i-x_{i+1})^2, \lambda\}.$$
as
$$f(\bmx)=||\bmH\bmx-\bmy||^2+w\sum_{i=1}^n\min\{(\bmphi_i^T\bmx)^2, \lambda\},$$
where $n=d-1,\bmH=\bmI_n$ is an identity matrix, $\bmPhi=(\bmphi_1,\ldots,\bmphi_n)^T\in R^{n\times d}$ with $\phi_{i,i}=-1,\phi_{i,i+1}=1$ and otherwise $\phi_{i,j}=0$ for all $i$ and $j$. We then minimize $f(\bmx)$ using the following iterative algorithm proposed in~\citet{portilla2015efficient}, where $\Theta_h(\cdot;\cdot)$ is the hard-threshold operator.

\begin{algorithm}[H]
 \caption{The iterative marginal optimization (IMO) algorithm for signal and image restoration, Adapted from Algorithm~1 in~\citet{portilla2015efficient}.\label{alg:IMO}}
  \begin{algorithmic}
  \Procedure{Threshold}{$\bmy\in R^{d}, \bmPhi\in R^{n\times d}, \bmlambda>0\in R^n$}
  	\State $\bmx\leftarrow\bmy$.
  	\While{true}
  		\State $\bmb\leftarrow\bmPhi\bmx$.
  		\State $\bma\leftarrow\Theta_h(\bmb;\sqrt{\bmlambda})$.
  		\State $\bmz\leftarrow w\bmPhi^T\bma$.
  		\State $\bmx\leftarrow(\bmH^T\bmH+w\bmPhi^T\bmPhi)^{-1}(\bmH^T\bmy+\bmz)$.
	  	\If{the change in $\bmx$ since the last iteration is less than a given tolerance level}
		 	\State \textbf{return} $\bmx$.
		\EndIf
  	\EndWhile
  \EndProcedure
  \end{algorithmic}
\end{algorithm}

\section{Proofs}

\begin{proof}[Proof of Proposition~\ref{equivalence}]~\\

\noindent To minimize~(\ref{likelihood}),
$$f(\bmbeta,\bmgamma)=\sum_{i=1}^n(y_i-\bmx^T_i\bmbeta-\gamma_i)^2+\lambda\sum_{i=1}^n1(\gamma_i\neq0),$$ 
notice that the minimization with respect to $\bmgamma$ can be performed componentwise. For each $\gamma_i$, if $\gamma_i=0$, we have
\beq
\label{choice1}
f(\bmbeta, \gamma_1,\ldots,\gamma_i=0,\ldots,\gamma_n)=\sum_{j \neq i}\left\{(y_j-\bmx^T_i\bmbeta-\gamma_j)^2+\lambda1(\gamma_j\neq0)\right\}+(y_i-\bmx^T_i\bmbeta)^2.
\eeq
On the other hand, if $\gamma_i\neq0$, we have
$$f(\bmbeta, \gamma_1,\ldots,\gamma_i\neq0,\ldots,\gamma_n)=\sum_{j \neq i}\left\{(y_j-\bmx^T_i\bmbeta-\gamma_j)^2+\lambda1(\gamma_j\neq0)\right\}+(y_i-\bmx^T_i\bmbeta-\gamma_i)^2+\lambda,$$
which is minimized at $\gamma_i=y_i-\bmx^T_i\bmbeta$, that is,
\beq
\label{choice2}
f(\bmbeta, \gamma_1,\ldots,\gamma_i=y_i-\bmx^T_i\bmbeta,\ldots,\gamma_n)=\sum_{j \neq i}\left\{(y_j-\bmx^T_i\bmbeta-\gamma_j)^2+\lambda1(\gamma_j\neq0)\right\}+\lambda.
\eeq
Comparing~(\ref{choice1}) with~(\ref{choice2}), it is easy to see that we should choose $\gamma_i=0$ if $(y_i-\bmx^T_i\bmbeta)^2<\lambda$ and $\gamma_i=y_i-\bmx^T_i\bmbeta$ othersise. Plugging the value of $\gamma_i$ into~(\ref{likelihood}), we have
$$f(\bmbeta, \bmgamma)=\sum_{i=1}^n\left[(y_i-\bmx^T_i\bmbeta)^2 1\{(y_i-\bmx^T_i\bmbeta)^2<\lambda\}+\lambda1\{(y_i-\bmx^T_i\bmbeta)^2\geq\lambda\}\right]=\sum_{i=1}^n \min \{(y_i-\bmx^T_i\bmbeta)^2,\lambda\}$$
which is the objective function $g(\bmbeta)$ in Proposition~\ref{equivalence}.
\end{proof}

\begin{proof}[Proof of Proposition~\ref{glm.equivalence}]~\\

\noindent Similar to the proof of Proposition~\ref{equivalence}, for the objective function in~(\ref{glm}), if $\gamma_i=0$, the $i$-th summand becomes
$
b(\bmx_i^T\bmbeta)-(\bmx_i^T\bmbeta)y_i.
$
Otherwise, if $\gamma_i \neq 0$, the $i$-th summand becomes 
$b(\bmx_i^T\bmbeta+\gamma_i)-(\bmx_i^T\bmbeta+\gamma_i)y_i+\lambda,$
which is minimized when
$$
y_i=\frac{\partial b(\bmx_i^T\bmbeta+\gamma_i)}{\partial\gamma_i}=g^{-1}(\bmx_i^T\bmbeta+\gamma_i)\Rightarrow \bmx_i^T\bmbeta+\gamma_i=g(y_i)
$$
which makes the $i$-th summand become
$\lambda^*:=b(g(y_i))- g(y_i)y_i+\lambda.$
The objective function can then be rewritten as:
$$\sum^n_{i=1} \min\{ b(\bmx_i^T\bmbeta)-(\bmx_i^T\bmbeta)y_i, \lambda^*\}$$
which completes the proof.
\end{proof}

\begin{proof}[Proof of Proposition~\ref{3sat}]~\\

\noindent Let $b_1,\ldots,b_n$ be $n$ Boolean variables, i.e., each $b_k$ only takes one of two possible values: TRUE or FALSE. For a 3-SAT problem $P$, suppose its formula is 
$$f(b_1,\ldots,b_n)=c_1\wedge\cdots\wedge c_m,$$
where $\wedge$ is the logical OR operator, and $\{c_i\}_{i=1}^m$ are the clauses\footnote{A clause is a disjunction of literals or a single literal. In a 3-SAT problem each clause has exactly three literals.} of P with 
$$c_i=(l_{i1}\vee l_{i2}\vee l_{i3}),$$ 
where $\vee$ is the logical AND operator, and $\{l_{ij}\}_{i=1}^m, j\in\{1,2,3\}$, are literals of~$P$. Each literal $l_{ij}$ is either a variable $b_k$ for which $l_{ij}$ is called a positive literal, or the negation of a variable $\neg b_k$ for which $l_{ij}$ is called a negative literal. Without loss of generality, suppose that each clause consists of exactly three literals, and that the three literals in each clause correspond to three distinct variables. The 3-SAT problem $P$ concerns about the satisfiability of $f(b_1,\ldots,b_n)$, i.e., whether there exists a possible assignment of values of $b_1,\ldots,b_n$ such that $f(b_1,\ldots,b_n)=$~TRUE.

We reduce the 3-SAT problem $P$ to the minimization of a sum of truncated convex functions $g(\bmx):R^n\rightarrow R$ as follows. Let $\bmx=(x_1,\ldots,x_n)\in R^n$ with each $x_k$ corresponds to a $b_k$ such that $b_k=\mbox{TRUE}$ if and only if $x_k>0$. For each clause $c_i=(l_{i1}\vee l_{i2}\vee l_{i3})$ of $P$, define a sum of seven truncated convex functions 
$$g_i(\bmx)=\sum_{t=1}^7\min(g_{it}(\bmx), 1)$$
where 
$$g_{it}(\bmx)=
\left\{\begin{array}{ll}
0 & \mbox{ if } \bmx\in S_{it1}\cap S_{it2}\cap S_{it3}\\
\infty & \mbox { otherwise}
\end{array}\right.$$
where $S_{itj}$ is one of the two half-spaces defined by $x_k>0$ and $x_k\leq0$, respectively, where $x_k$ is the variable corresponding to $l_{ij}$, that is, $l_{ij}=b_k$ or $l_{ij}=\neg b_k$. We choose $S_{itj}$ as the half-space defined by $x_k>0$ if and only if $(b(j,t)-\frac12)$ has the same sign as $l_{ij}$, where $b(j,t)$ is the $j$-th digit (from left to right) of $t$ when $t\in\{1,\ldots,7\}$ is represented as three binary digits. For instance, for a clause $c_i=(b_1\vee\neg b_2\vee b_3)$, we have
$$g_{i1}(\bmx)=
\left\{\begin{array}{ll}
0 & \mbox{ if } x_1\leq0,x_2>0,x_3>0\\
\infty & \mbox { otherwise}
\end{array}\right.$$
and 
$$g_{i7}(\bmx)=
\left\{\begin{array}{ll}
0 & \mbox{ if } x_1>0,x_2\leq0,x_3>0\\
\infty & \mbox { otherwise}
\end{array}\right.$$
Since all the half-spaces, as well as their intersections, are convex sets, all the $g_{it}(\bmx)$'s are convex functions. Furthermore, since the regions in which $g_{it}(\bmx)=0, t\in\{1,\ldots,7\}$, are disjoint, it is easy to verify that $g_i(\bmx)$ can only take one of two possible values
$$g_i(\bmx)=
\left\{\begin{array}{ll}
6 & \mbox{ if } c_i \mbox{ is satisfied by the assigned values of } b_1,\ldots,b_n \\
7 & \mbox { otherwise}
\end{array}\right.$$
where we choose $b_k=\mbox{TRUE}$ if and only if $x_k>0$. The reduction is then completed by noticing that the 3-SAT problem $P$ is satisfiable if and only if the minimum value of the function $g(\bmx)=\sum_{i=1}^mg_i(\bmx)$ is~$6m$, and that it is easy to see that the reduction can be done in polynomial time.

\end{proof}

\begin{proof}[Proof of Proposition~\ref{ignore}]~\\

\noindent On one hand, we have
\beq
\label{forward}
\min_\bmx\sum_{i=1}^{n}\min\{f_i(\bmx),0\}=\min_j\min_{\bmx\in A_j} \sum_{k \in I_j} f_k(\bmx)\geq\min_j\min_{\bmx} \sum_{k \in I_j} f_k(\bmx),
\eeq
On the other hand, we have
\beq
\label{backward}
\begin{array}{ll}
\displaystyle\min_j\min_\bmx \sum_{k \in I_j} f_k(\bmx)&\displaystyle\geq\min_j\min_{\bmx} \sum_{k \in I_j}\min\{f_k(\bmx),0\}\\
&\displaystyle\geq\min_j\min_\bmx\sum_{i=1}^n\min\{f_i(\bmx),0\}\\
&\displaystyle=\min_\bmx\sum_{i=1}^{n}\min\{f_i(\bmx),0\}
\end{array}
\eeq
Putting~(\ref{forward}) and~(\ref{backward}) together, we have
$$\min_\bmx\sum_{i=1}^{n}\min\{f_i(\bmx),0\}=\min_j\min_\bmx \sum_{k \in I_j} f_k(\bmx).$$
\end{proof}

\section{Supplementary figures and tables}

\begin{figure}[H]
\begin{center}
\includegraphics[width=1\textwidth]{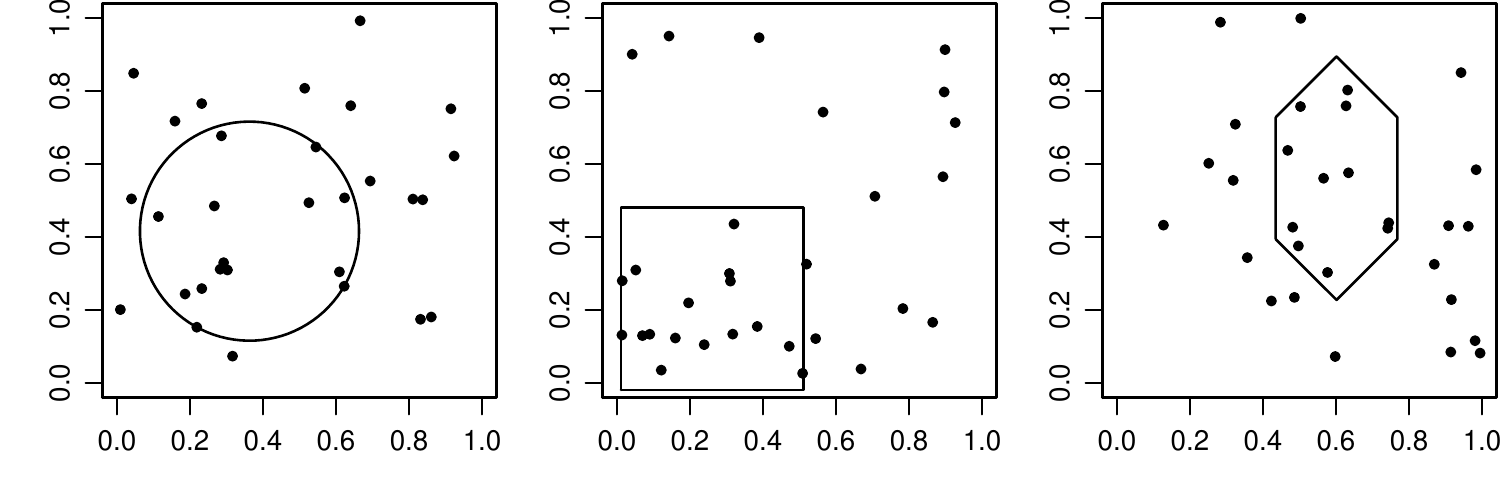}
\end{center}
\caption{Placement of different convex shapes to cover the maximum number of points uniformly sampled from the unit square. \label{fig:shape}}
\end{figure}

\begin{figure}[H]
\begin{center}
\includegraphics[width=1\textwidth]{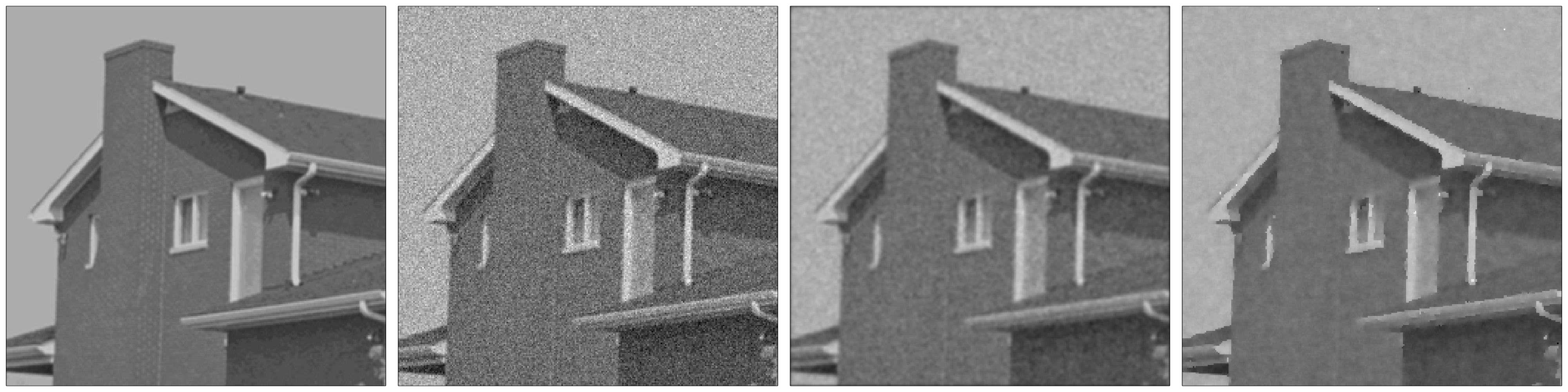}
\includegraphics[width=1\textwidth]{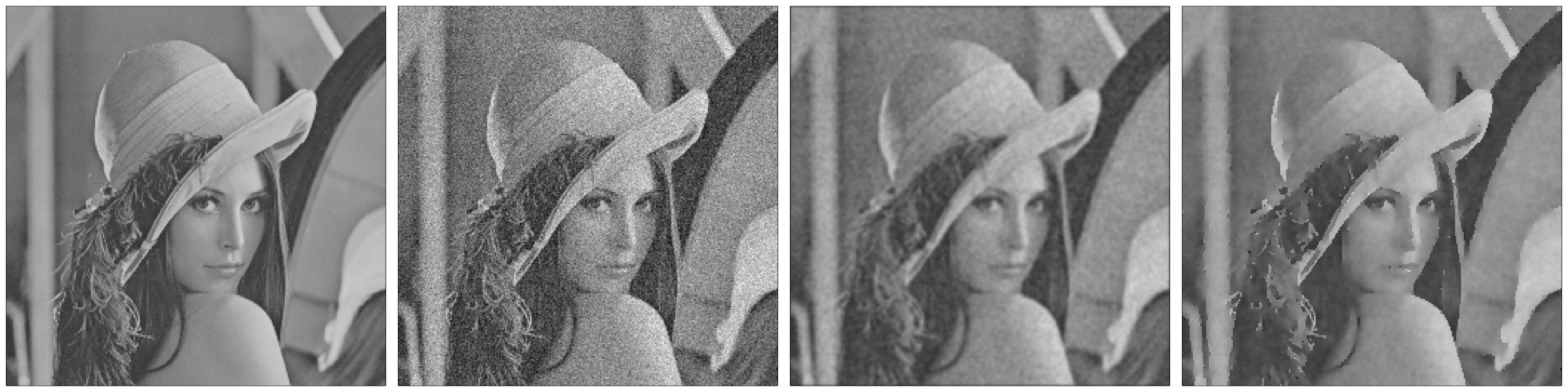}
\end{center}
\caption{Restoration of images. For each row, from left to right: original image, image with Gaussian noise added, image restored using Gaussian smoothing with a $5\times5$ kernel and image restored using proposed algorithm. \label{fig:image.more}}
\end{figure}

\begin{table}[H]
\caption{Comparison of different methods for outlier detection in simple linear regression. The table shows the leverages of outliers (L), percents of outliers (O\%) and mean percents of masking and swamping for all the methods using 100 simulated replicates. The standard errors of the means are given in parentheses. \label{tab:outlier}}
\centering
Masking\\
\begin{tabular}{ccccccc}
  \hline
L & O\% & MM & LTS & GY & IPOD & Proposed \\ 
  \hline
0 & 5 & 0.6 (0.3) & 0.6 (0.3) & 1.0 (0.4) & 0.6 (0.3) & 0.8 (0.4) \\ 
  0 & 10 & 2.3 (0.5) & 2.1 (0.5) & 2.4 (0.5) & 2.0 (0.5) & 2.0 (0.5) \\ 
  0 & 20 & 1.8 (0.3) & 1.8 (0.3) & 1.9 (0.3) & 1.5 (0.3) & 1.4 (0.3) \\ 
  0 & 30 & 3.1 (0.3) & 2.6 (0.3) & 3.2 (0.3) & 2.3 (0.3) & 2.3 (0.3) \\ 
  0 & 45 & 11.1 (0.5) &  4.8 (0.3) &  9.8 (0.4) &  2.5 (0.3) &  2.5 (0.3) \\ 
  0 & 60 & 33.2 (0.5) & 24.3 (0.5) & 33.0 (0.5) & 34.4 (0.7) &  2.8 (0.4) \\ 
  20 & 5 & 2.0 (0.6) & 2.4 (0.7) & 2.0 (0.6) & 1.8 (0.6) & 1.8 (0.6) \\ 
  20 & 10 & 3.1 (0.6) & 2.8 (0.6) & 3.1 (0.6) & 2.7 (0.5) & 2.8 (0.5) \\ 
  20 & 20 & 4.8 (0.7) & 2.8 (0.4) & 3.7 (0.5) & 3.2 (0.5) & 2.8 (0.4) \\ 
  20 & 30 & 14.1 (1.0) &  5.7 (0.6) &  8.9 (0.7) &  5.5 (0.7) &  3.9 (0.4) \\ 
  20 & 45 & 34.5 (0.7) & 13.8 (0.8) & 29.6 (0.9) & 18.0 (1.4) &  7.5 (0.9) \\ 
  20 & 60 & 34.0 (0.7) & 36.1 (0.5) & 34.9 (0.6) & 35.0 (0.5) & 24.1 (1.3) \\ 
   \hline
\end{tabular}
\vspace{0.5cm}\\
Swamping\\
\begin{tabular}{ccccccc}
  \hline
L & O\% & MM & LTS & GY & IPOD & Proposed \\ 
  \hline
0 & 5 & 1.0 (0.1) & 1.4 (0.1) & 1.0 (0.1) & 1.1 (0.1) & 1.0 (0.1) \\ 
  0 & 10 & 1.1 (0.1) & 1.2 (0.1) & 1.1 (0.1) & 1.2 (0.1) & 1.2 (0.1) \\ 
  0 & 20 & 1.3 (0.1) & 1.3 (0.1) & 1.3 (0.1) & 1.4 (0.1) & 1.3 (0.1) \\ 
  0 & 30 & 1.3 (0.1) & 1.3 (0.1) & 1.5 (0.2) & 1.4 (0.1) & 1.3 (0.2) \\ 
  0 & 45 & 19.0 (1.1) &  2.9 (0.3) & 15.1 (0.8) &  1.4 (0.1) &  1.3 (0.1) \\ 
  0 & 60 & 96.1 (0.6) & 73.2 (1.4) & 95.5 (0.6) & 90.5 (1.8) &  2.0 (0.6) \\ 
  20 & 5 & 1.5 (0.1) & 1.8 (0.1) & 1.4 (0.1) & 1.5 (0.1) & 1.5 (0.1) \\ 
  20 & 10 & 1.1 (0.1) & 1.3 (0.1) & 1.1 (0.1) & 1.2 (0.1) & 1.3 (0.1) \\ 
  20 & 20 & 1.2 (0.1) & 1.1 (0.1) & 1.1 (0.1) & 1.1 (0.1) & 1.1 (0.1) \\ 
  20 & 30 & 4.0 (0.4) & 1.6 (0.2) & 1.9 (0.2) & 1.6 (0.1) & 1.4 (0.1) \\ 
  20 & 45 & 30.2 (0.8) &  4.5 (0.3) & 20.7 (0.7) &  8.6 (0.9) &  2.4 (0.3) \\ 
  20 & 60 & 44.9 (0.9) & 34.8 (0.9) & 42.8 (0.9) & 33.0 (1.3) & 13.5 (1.1) \\ 
   \hline
\end{tabular}
\end{table}

\begin{table}[H]
\caption{Stopping criteria of the simulation studies in Sections~\ref{sec:quadratic.experiment} and ~\ref{sec:image.experiment}\label{tab:settings}}
\centering
\begin{tabular}{@{}ccccc@{}}
\hline
            & \begin{tabular}[c]{@{}c@{}}Maximal number\\  of function \\ evaluations\end{tabular} & \begin{tabular}[c]{@{}c@{}}Maximal number\\  of iterations\end{tabular} & Tolerance & \begin{tabular}[c]{@{}c@{}}Maximal steps when\\ no improvement\\  in the estimate\end{tabular} \\ 
            \hline
DIRECT      & $10^4$                                                                               & -                                                                       & $10^{-8}$ & -                                                                                              \\
StoGO      & $10^4$                                                                               & -                                                                       & $10^{-8}$ & -                                                                                              \\
SA          & $10^4$                                                                               & $10^4$  & -         & $10^6$                                                                                         \\
PSO         & $10^4$                                                                               & $10^4$                                                                    & $10^{-8}$ & -                                                                                              \\ 
IMO/DC         &                                                                                      & $10^4$                                                                    & $10^{-8}$ & -                                                                                              \\ 
Proposed (high-D)  &                                                                                      & $10^4$                                                                    & $10^{-8}$ & -                                                                                              \\ 
\hline
\end{tabular}
\end{table}

\begin{table}[H]
\caption{Comparison of different algorithms for global optimization of the sum of $50$ randomly generated truncated quardratic funstions in 2-D. The table shows the complexities of the functions ($C$) as well as mean success rates (in percents) and running times (in seconds) for all the methods using 100 simulated replicates. The standard errors of the means are given in parentheses. \label{tab:quadratic.2D}}
\centering
Success Rate\\
\begin{tabular}{ccccccc}
  \hline
C & DIRECT & StoGO & SA & PSO & DC & Proposed \\ 
  \hline
1 & 100.0 (0.0) &  86.0 (3.5) &  98.0 (1.4) &  99.0 (1.0) &  27.0 (4.5) & 100.0 (0.0) \\ 
  5 &  99.0 (1.0) &  74.0 (4.4) &  97.0 (1.7) &  93.0 (2.6) &   9.0 (2.9) & 100.0 (0.0) \\ 
  10 &  88.0 (3.3) &  57.0 (5.0) &  85.0 (3.6) &  72.0 (4.5) &   1.0 (1.0) & 100.0 (0.0) \\ 
   \hline
\end{tabular}
\vspace{0.5cm}\\
Running Time\\
\begin{tabular}{ccccccc}
  \hline
C & DIRECT & StoGO & SA & PSO & DC & Proposed \\ 
  \hline
1 & 0.45 (0.01) & 2.76 (0.02) & 0.40 (0.00) & 2.29 (0.05) & 0.50 (0.03) & 3.07 (0.03) \\ 
  5 & 0.42 (0.01) & 2.62 (0.07) & 0.39 (0.00) & 2.66 (0.08) & 2.95 (0.29) & 2.62 (0.03) \\ 
  10 & 0.44 (0.05) & 2.41 (0.02) & 0.37 (0.00) & 2.82 (0.12) & 8.04 (0.71) & 2.35 (0.03) \\ 
   \hline
\end{tabular}
\end{table}

\end{document}